\newcommand{\titlepaper}{Computationally and statistically efficient estimation of time-smoothed counterfactual curves}
\date{}
\author[1,*]{Herbert P. Susmann}
\author[2]{Nicholas T. Williams}
\author[1]{Richard Liu}
\author[3]{Jessica G. Young}
\author[1]{Iv\'an D\'iaz}
\affil[1]{\small Division of Biostatistics, Department of Population
  Health, New York University Grossman School of Medicine, New York, USA}
\affil[2]{\small Department of Epidemiology
Mailman School of Public Health, Columbia University}
\affil[3]{\small Department of Population Medicine, Harvard Medical School and Harvard Pilgrim Health Care Institute, Boston, USA}
\affil[*]{\small Corresponding author. Address: 180 Madison Avenue, New York, NY, 10016. Email: \url{susmah01@nyu.edu}}
\newtheorem{theorem}{Theorem}
  \theoremstyle{definition}
  \newtheorem{assumption}{}
  \theoremstyle{definition}
  \theoremstyle{definition}
\newtheorem{lemma}{Lemma}
\renewcommand{\P}{\mathsf{P}}
\renewcommand{\Pr}{\mathbb{P}}
\newcommand{\F}{\mathsf{F}}
\newcommand{\G}{\mathcal{G}}
\newcommand{\Rem}{\mathsf{R}}
\renewcommand{\d}{d}
\newcommand{\m}{\mathsf{m}}
\newcommand{\nb}{N_{[\,]}}
\DeclareMathOperator{\diag}{diag}
\newcommand{\indep}{\mbox{$\perp\!\!\!\perp$}} 
\newcommand{\dd}{\,\mathrm{d}}
\newcommand{\Pn}{\mathsf{P}_n}
\newcommand{\thetasr}{\hat\theta_{\mbox{\scriptsize sr}}}
\newcommand{\thetasdr}{\hat\theta_{\mbox{\scriptsize sdr}}}
\newcommand{\one}{\mathds{1}}
\newcommand{\prob}{\mathbb{P}}
\newcommand{\E}{\mathsf{E}}
\renewcommand{\P}{\mathsf{P}}
\newcommand{\1}{\mathbbm{1}}
\renewenvironment{proof}{{\it Proof }}{\qed \\}
\DeclareMathOperator*{\argmin}{\arg\!\min}
\DeclarePairedDelimiterX{\norm}[1]{\lVert}{\rVert}{#1}
\tikzset{
>=stealth',
punkt/.style={
rectangle,
rounded corners,
draw=black, very thick,
text width=6.5em,
minimum height=2em,
text centered},
pil/.style={
->,
thick,
shorten <=2pt,
shorten >=2pt,}
}
\newcommand{\Vertex}[3]
{\node[minimum width=0.6cm,inner sep=0.05cm] (#2) at (#1) {#3};
}
\newcommand{\Vertexr}[3]
{\node[rectangle, draw, minimum width=0.6cm,inner sep=0.05cm] (#2) at (#1) {#2};
}
\newcommand{\ArrowR}[3]%
{ \begin{pgfonlayer}{background}
\draw[->,#3] (#1) to[bend right=30] (#2);
\end{pgfonlayer}
}
\newcommand{\ArrowLW}[3]%
{ \begin{pgfonlayer}{background}
\draw[->,#3] (#1) to[bend left=30] (#2);
\end{pgfonlayer}
}
\newcommand{\ArrowL}[3]%
{ \begin{pgfonlayer}{background}
    \draw[->,#3] (#1) to[bend left=45] (#2);
  \end{pgfonlayer}
}
\newcommand{\EdgeL}[3]%
{ \begin{pgfonlayer}{background}
\draw[dashed,#3] (#1) to[bend right=-45] (#2);
\end{pgfonlayer}
}
\newcommand{\ArrowB}[3]%
{ \begin{pgfonlayer}{background}
    \draw[|-arcs,line width=0.4mm,shorten <= 0.3cm,shorten >= 0.3cm,#3] (#1) -- +(#2);
  \end{pgfonlayer}
}
\newif\ifshowarrows
\newcommand{\lmtpwide}[1]{
    \pgfkeys{/lmtpwide, default, #1}
    \pgfmathsetmacro{\tau}{\lmtpwidetau}
    \pgfmathsetmacro{\target}{\lmtpwidetarget}
    \pgfmathtruncatemacro{\tauminusone}{\tau-1}
    \pgfmathtruncatemacro{\targetminusone}{\target-1}
    
    \foreach \x in {1,...,\tau}{
        \pgfmathsetmacro{\hue}{(\x - 1) / \tau}
        \definecolor{mycolor\x}{hsb}{\hue,0.3,0.7}
        \def\basecolor{mycolor\x}
        
        \ifnum \x>\target
            \edef\basecolor{gray}
            \pgfmathsetmacro{\opacity}{0.40}
        \else
            \pgfmathsetmacro{\opacity}{1}
        \fi
        
        \def\boxtext{}
         \ifnum \x>\targetminusone
            \ifnum \x=\tau
                \edef\boxtext{$Y_\x$}
            \else
                \def\boxtext{$\tilde{Y}_{\tau,\x}$}
            \fi
        \else
            \edef\boxtext{$Z_\x$}
        \fi
        
        \fill[\basecolor!100, opacity=\opacity](\x,0) rectangle ++(0.7,0.7) node[pos=0.5,color=white] {\boxtext};
    }

    \ifshowarrows
        \ifnum \target > 1
            \foreach \x in {1,...,\targetminusone}{
                \draw[<-,>=latex] (\x+0.25,0.75) to[bend left=20] (\target + 0.25,0.75);
            };
        \fi
    \fi
}
\newcommand{\lmtplong}[1]{
    \pgfkeys{/lmtpwide, default, #1}
    \pgfmathsetmacro{\tau}{\lmtpwidetau}
    \pgfmathsetmacro{\target}{\lmtpwidetarget}
    \pgfmathtruncatemacro{\tauminusone}{\tau-1}
    \pgfmathtruncatemacro{\targetminusone}{\target-1}

    \foreach \y in {1,...,\tauminusone}{
        \foreach \x in {1,...,\tau}{
            \pgfmathtruncatemacro{\outcome}{\tau - \y + 1}
            \pgfmathtruncatemacro{\lagged}{\outcome - (\tau - \x)}
            \def\boxtext{}
             \ifnum \x>\targetminusone
                \ifnum \x=\tau
                    \def\boxtext{$Y_\outcome$}
                \else
                    \ifnum \lagged > 0
                        \def\boxtext{$\tilde{Y}_{\outcome,\lagged}$}
                    \else
                        \def\boxtext{}
                    \fi
                \fi
            \else
                \ifnum \lagged > 0
                    \def\boxtext{$Z_\lagged$}
                \else
                    \def\boxtext{}
                \fi
            \fi

            \ifnum \x=\tau
                \pgfmathsetmacro{\hue}{(\outcome - 1) / \tau}
            \else
                \pgfmathsetmacro{\hue}{(\lagged - 1) / \tau}
            \fi
            
            \definecolor{mycolor\x}{hsb}{\hue,0.3,0.7}
            \def\basecolor{mycolor\x}

            \pgfmathsetmacro{\opacity}{1}
            \ifnum \x > \target
                \pgfmathsetmacro{\opacity}{0.4}
                \edef\basecolor{gray}
            \else
                \ifnum \x=\tau
                    \pgfmathsetmacro{\c}{(1 - (\outcome)/6)*100}
                \else
                    \ifnum \lagged > 0
                        \pgfmathsetmacro{\c}{(1 - (\lagged)/6)*100}
                    \else
                        \pgfmathsetmacro{\opacity}{0.4}
                        \edef\basecolor{gray}
                    \fi
                \fi
            \fi

            \ifshowarrows
                \ifnum \y > \targetminusone
                    \pgfmathsetmacro{\opacity}{0.4}
                    \edef\basecolor{gray}
                \fi
            \fi
            
            \fill[\basecolor!100, opacity=\opacity](\x,-1.25 * \y) rectangle ++(0.7,0.7) node[pos=0.5,color=white] {\boxtext};
        }
    }

    \ifshowarrows
        \ifnum \target > 1
            \foreach \y in {1,...,\targetminusone}{
                \foreach \x in {1,...,\targetminusone}{
                    \draw[<-,>=latex] (\x+0.25,-1.25 * \y+0.75) to[bend left=20] (\target + 0.25,-1.25 * \y+0.75);
                };
            }
        \fi
    \fi
}
\title{\titlepaper}
\begin{document}
\maketitle

\begin{abstract}
Longitudinal causal inference is concerned with defining, identifying, and estimating the effect of a time-varying intervention on a time-varying outcome that is indexed by a follow-up time. In an observational study, Robins's generalized g-formula can identify causal effects induced by a broad class of time-varying interventions. Various methods for estimating the generalized g-formula have been posed for different outcome types, such as a failure event indicator \textit{by} a specified time (e.g. mortality by 5 year follow-up), as well as continuous or dichotomous/multi-valued outcomes measures \textit{at} a specified time (e.g. blood pressure in mm/hg or an indicator of high blood pressure at 5-year follow-up). Multiply-robust, data-adaptive estimators leverage flexible nonparametric estimation algorithms while allowing for statistical inference. However, extant methods do not accommodate time-smoothing when multiple outcomes are measured over time, which can lead to substantial loss of precision. We propose a novel multiply-robust estimator of the generalized g-formula that accommodates time-smoothing over numerous available outcome measures. Our approach accommodates any intervention that can be described as a \textit{Longitudinal Modified Treatment Policy}, a flexible class suitable for binary, multi-valued, and continuous longitudinal treatments. Our method produces an estimate of the \textit{effect curve}: the causal effect of the intervention on the outcome at each measurement time, taking into account censoring and non-monotonic outcome missingness patterns. In simulations we find that the proposed algorithm outperforms extant multiply-robust approaches for effect curve estimation in scenarios with high degrees of outcome missingness and when there is strong confounding. We apply the method to study longitudinal effects of union membership on wages. 
\end{abstract}

\section{Introduction}
Under causal assumptions, Robins's generalized g-formula \citep{Robins86} can non-parametrically identify, via only measured study variables in a longitudinal observational study, effects of flexibly/pragmatically defined time-varying treatment interventions on an outcome mean indexed by a subsequent follow-up time. These  include interventions that may depend dynamically, and possibly stochastically, on time-varying past characteristics, including natural treatment values \citep{richardson2013single,young2014identification}.  Such interventions have been referred to as longitudinal modified treatment policies \citep[LMTPs,][]{diaz2023nonparametric}.

When targeting an effect of an LMTP via the generalized g-formula, several types of outcomes may be of interest.  For example, a researcher might be interested in an intervention effect on a survival/failure event indicator \textit{by} a specified time (e.g. mortality by 5 year follow-up), a continuous outcome \textit{at} a specified time  (e.g. blood pressure in mm/hg or at 5-year follow-up), or a discrete outcome \textit{at} a specified time (e.g. an indicator of high blood pressure at 5-year follow-up). Researchers may also wish to estimate a set of causal effects for multiple outcomes indexed by time (e.g. the blood pressure in mm/ht at years 1-5 post follow-up). We refer to such a set of causal effects as an \textit{effect curve}. Regardless of outcome type, estimators of the generalized g-formula that in some way allow \textit{time-smoothing} in all available outcome measures are attractive for improved precision.  

The ability to time-smooth in all available measured outcomes may be particularly important for adequate precision in the context of non-survival outcomes, whether continuous or discrete, in common observational study designs.  For example, in so-called ``clinical cohorts''  obtained from electronic health records \citep{hernanobsplans}, outcomes like blood pressure or weight change may be repeatedly measured over a follow-up period yet exhibit substantial (non-monotonically patterned and informative) missingness at any given point in time. In such settings, there may be relatively few individuals with a measure of blood pressure at e.g. exactly 10 months post-baseline relative to the baseline sample size. In this case, an estimator that can only use outcome measures at 10 month, ignoring all available blood pressure measures at 9 and 11 months, will be generally less precise than one that can ``borrow'' information on outcomes at these ``close'' time points -- in other words, that can smooth over time. 

For survival and failure event outcomes, various estimators of the generalized g-formula that time-smooth over available outcomes have been posed. For example, implementations of the parametric g-formula \citep{Robins86,robins2004effects}, a fully parametric estimator of the generalized g-formula that includes a parametric pooled in time outcome hazard model conditional on past treatment and confounders, is available in the \texttt{gformula} R package \citep{mcgrath2020gformula}. Inverse probability (IP) weighted estimators have also been proposed that smooth in a marginal structural model for the time-varying outcome hazards and standardize with respect to baseline covariates to obtain marginal cumulative risk estimates \citep{cain2010start, young2018inverse}.  Extending previous work on time-smoothed repeated outcomes models for informative outcome missingness and/or time-varying treatments \citep{robins1995analysis,hernanrepeated,hu2019causal} to accommodate causal effects of generalized treatment strategies, \cite{mcgrathrepeated} recently posed an IP weighted estimator of the generalized g-formula for such repeated outcomes that allows time-smoothing via a parametric model for the generalized g-formula itself. However, these existing approaches to time-smoothing in estimation of the generalized g-formula have the disadvantage of relying on parametric model assumptions, both for nuisance parameter estimation (i.e., control of baseline and time-varying confounders and selection factors for censoring and outcome measurement) but also for time-smoothing in the outcome for improved precision.  That is, these previous proposals generally pay a bias price for the precision gained by time smoothing. This bias is problematic as it does not dissipate with increasing sample size, leading to confidence intervals that have zero coverage probability at large sample sizes. 

A variety of non-parametric strategies have been developed for estimating causal target parameters, including targeted minimum loss-based estimation, one-step estimation, and double/debiased Machine Learning, among others \citep{pfanzagl1982contributions, vanderLaanRose11, chernozhukov2016double}. For the generalized g-formula, \cite{diaz2023nonparametric} proposed sequentially robust non-parametric estimators for longitudinal causal effects defined as LMTPs, a flexible class handling binary, time-to-event, multi-valued, and continuous longitudinal treatments. \cite{shahu2025estimating} also consider estimating effect curves induced by LMTP interventions. However, these previous estimators do not accommodate time-smoothing. In this work, we propose a time-smoothed, non-parametric sequentially doubly robust (SDR) estimation approach for longitudinal causal effects defined via LMTPs. Our approach avoids the bias introduced by parametric approaches while nevertheless benefiting from the gain of precision possible with time-smoothing. Furthermore, unlike existing estimators which would require multiple runs of the estimation algorithm, the proposed algorithm is computationally efficient, allowing for practical estimation of an effect curve comprising causal effects at multiple time points in a single run.

We prove that the resulting estimator is asymptotically normal and efficient, achieving the semi-parametric efficiency bound for the longitudinal curve parameter. Furthermore, the algorithm incorporates several additional improvements. First, the effect curve algorithm incorporates outcome missingness arising from longitudinal censoring as well as missingness due to sporadic measurement. Second, the proposed algorithm incorporates isotonic regression to stabilize estimation of the double-robust transformations, which ensures that the pseudo-regressions involved stay in their parameter space. We show via simulations that this reduces the variance of the estimator in finite samples, and prove theoretically that it retains all the desirable properties of the SDR estimator. Isotonic regression for stabilizing the SDR algorithm is generally applicable to other estimators of this type, and therefore is of wider interest beyond our specific longitudinal setting \citep{pmlr-v202-van-der-laan23a}. Finally, we propose a method for forming uniform confidence bands for the full curve based on the multiplier bootstrap. 

We begin by introducing notation, defining the \textit{effect curve} target causal parameter, and stating the causal identification results in Section~\ref{sec:notation}. We propose a computationally efficient algorithm for estimating effect curves and performing statistical inference in Section~\ref{sec:algorithm}. Simulations are presented in Section~\ref{sec:simulations}. An application to estimating the effect of union membership on wages in a longitudinal setting is presented in Section~\ref{sec:application}.

\section{Notation, target causal parameter, and identification}
\label{sec:notation}
First, we introduce the longitudinal data structure that we use to define the effect curve target parameter. This data structure extends that of \cite{diaz2023nonparametric}, with the key difference that the outcome is now time-varying. Formally, let $X=(L_1, A_1 \ldots, L_{\tau}, A_{\tau}, L_{\tau+1})$, denote the observed data, where $L_t$ is a vector of time-varying variables that includes a time-varying outcome $Y_t$. The variable $A_t=(Z_t, R_t, C_t)$ is a vector containing a treatment variable $Z_t$, which may be discrete or continuous, an indicator of loss-to-follow-up $C_t$ equal to one if the unit remains in the study at time $t + 1$ and equal to zero otherwise, and an indicator $R_t$ that the outcome was measured at time $t+1$. For both $R_t$ and $C_t$, it is important to keep in mind that they refer to missingness of $Y_{t+1}$. This setup distinguishes between two types of missing data, which are handled differently: missing data due to loss-to-follow-up, via the indicator $C_t$, and missing data due to sporadic outcome data collection, via the indicator $R_t$. The outcome may be numerical or an indicator of not experiencing an event of interest by time $t$, in which case we augment the data with a binary variable  $N_t$ indicating whether the patient is at risk of the outcome at time $t$, i.e., whether the outcome did not occur prior to time $t$. For a numerical outcome we let $N_t=1$ for all~$t$.

\subsection{Notation} Following convention in the longitudinal causal inference literature, we use overbars to denote the history of a variable up to and including time $t$; e.g. $\bar L_t=(L_1,\ldots, L_t)$, and use underbars to denote the future including $t$; e.g. $\underline L_t = (L_t,\ldots, L_{\tau+1})$. We also use $H_t=(\bar L_t, \bar A_{t-1})$ to denote the history of all data measured up to right before $A_t$. Let $X_1, \ldots, X_n$ denote a sample of i.i.d. observations with
$X_i\sim \P$. Let $\P f = \int f(x)\dd \P(x)$ for a given function
$f(x)$. We use $\Pn$ to denote the empirical distribution of
$X_1,\ldots\,X_n$, and assume $\P$ is an element of the nonparametric
statistical model defined as all continuous densities on $X$ with
respect to a dominating measure $\nu$. We let $E$ denote the
expectation with respect to $P$; i.e.,
$E\{f(X)\} = \int f(x)\dd P(x)$, and let $\mathbb{E}$ denote expectation w.r.t the distribution of an i.i.d. sequence $X_1,\ldots,X_n$; i.e., $\mathbb{E}\{f(X_1,\ldots,X_n)\} = \int f(x_1,\ldots,x_n)\prod_{i=1}^n\dd P(x_i)$. Let $||f||^2$ denote the
$L_2(\P)$ norm $\int f^2(x)\dd\P(x)$. Calligraphic font is used to denote the support of a random variable, e.g.,
$\mathcal A_t$ denotes the support of $A_t$. By convention, variables with an index $t\leq 0$ are
defined as the null set, expectations conditioning on a null set are
marginal, products of the type $\prod_{t=k}^{k-1}b_t$ and
$\prod_{t=0}^0b_t$ are equal to one, and sums of the type
$\sum_{t=k}^{k-1}b_t$ and $\sum_{t=0}^0b_t$ are equal to zero. For
vectors $u$ and $v$, $u \le v$ denotes point-wise inequalities.

\subsection{Effect curves} We are concerned with the definition and estimation of the causal effect of an
intervention on the treatment process $\bar Z$ on the time-varying outcome 
$Y_{t}$. The interventions on the treatment process are defined in terms of
longitudinal modified treatment policies, which are
hypothetical interventions where the treatment is assigned a random
variable $Z_t^\d$ which may depend on the natural value of treatment at time $t$
as explained below. An intervention that sets the treatments up to time $t-1$ to
$\bar Z_{t-1}^\d$ generates a counterfactual variable $Z_t(\bar Z_{t-1}^\d)$,
which is referred to as the \textit{natural value of treatment}, and represents the value of treatment that would have been observed at time $t$ under an intervention carried out up until time $t-1$ but discontinued thereafter. An intervention on all the treatment and censoring variables up to $t-1$, together with an intervention $R_{t-1}=1$, yields a counterfactual outcome process $Y_{t}(\bar Z^\d_t)$. Causal effects are defined as contrasts between the counterfactual expectation processes $\theta(t + 1) = \E[Y_{t + 1}(\bar Z^\d_{t+1})]$ for $t \in \{1, \dots, \tau \}$ implied by different interventions $\d$. We refer to the set of counterfactual expectations $(\theta(t + 1) : t \in \{ 1, \dots, \tau \})$ as a \textit{curve}. The contrast between two curves is referred to as an \textit{effect curve}. 

We focus on causal effects defined by a user-given
function $\d(z_t, h_t,\varepsilon_t)$ that maps a given treatment
value $z_t$ (i.e., the ``natural value of treatment'' at time $t$) and a history $h_t$ into a new treatment value.  The function $\d$ is also allowed to depend on a
random variable $\varepsilon_t$, drawn independently across units and independently of all data, and with a 
distribution not dependent on $\P$. 
For fixed values $\bar z_t$, and $\bar l_t$, we
recursively define $z_t^\d=\d(z_t, h^\d_t, \varepsilon_t)$, where
$h^\d_t=(\bar z_{t-1}^\d, \bar l_t)$. The
interventions that we study are thus recursively defined as $Z^\d_t = \d(Z_t(\bar
Z_{t-1}^\d), H_t(\bar Z_{t-1}^\d), \varepsilon_t)$ for a user-given
function $\d$, where we let $Z_1(\bar
Z_{0}^\d)=Z_1$ and $H_1(\bar Z_{0}^\d)=L_1$.

The counterfactual process $\theta(t)$ is identifiable under assumptions given elsewhere (see Assumptions 1-3 of \cite{diaz2023nonparametric} and the assumptions given in \cite{richardson2013single, young2014identification}). For $t\in\{1,\ldots, \tau\}$, let 
\begin{align}
    m_{t+1,t}(z_t, h_t) = E[Y_{t+1}\mid C_t= N_t=R_t =1, Z_t = z_t, H_t=h_t]. 
\end{align}
For $s < t$,
define
\begin{align}
    \label{eq:1}
    m_{t+1, s}(z_s, h_s) = E[N_{s+1}\times m_{t+1, s+1}(d(Z_{s+1}, H_{s+1}), H_{s+1})\mid
  C_s= N_s=1, Z_s = z_s, H_s=h_s].  
\end{align}
Then, under assumptions, $\theta(t+1)$ is identified as
\begin{align}
    \label{eq:identification-result}
  \theta(t + 1) = \E[m_{t+1,1}(d(Z_1, H_1), H_1)],
\end{align}
where $H_1=L_1$. It is helpful when reading the notation to keep in mind that for $m_{t+1, s}$, the first subscript $t+1$ refers to the target outcome time point, and the second subscript index $s$ refers to the first time point in the sequence of regressions. In addition, observe that the \textit{difference} between $t+1$ and $s$ indicates the lag between the final target outcome and the first time point in the sequence of regressions. For example, when the difference between $t+1$ and $s$ is $1$, then the sequential regression $m_{t+1, s}$ is the regression of the outcome $Y_{t+1}$ on the variables one time-point previously.

\section{Statistically and computationally efficient estimation}
\label{sec:algorithm}
We will work our way up to our proposed algorithm in multiple steps. First, we review a standard approach based on applying a sequential regression estimator multiple times, once for each time-varying outcome. This approach is a straightforward application of the identification result \eqref{eq:1}. Next, we introduce the key innovation of our computationally efficient algorithm: by pooling sets of sequential regressions, we are able to significantly reduce the number of regressions required to estimate a full curve. The second algorithm, which we call a \textit{time-smoothed} sequential regression algorithm, applies this principle. Finally, the third algorithm modifies the approach to use sequential double-robust transformations which leads to favorable statistical properties, including multiple-robustness and asymptotic normality under data-adaptive (e.g., machine learning) estimation of the nuisance parameters involved.

\subsection{Standard sequential regression estimator}
The definition of $m_{t+1,s}$ in the identification result \eqref{eq:1} requires estimating a series of
sequential regressions for $s\leq t\in\{1,\ldots,\tau\}$. The standard sequential regression estimator estimates each of these regressions separately. To illustrate, consider the case of $\tau = 3$, in which there are four time-varying outcomes $Y_1$, $Y_2$, $Y_3$, and $Y_4$. The corresponding target parameters are $\theta(2)$, $\theta(3)$, and $\theta(4)$. Each of these parameters could be estimated separately with a standard sequential regression estimator. In each case, a sequence of regressions is estimated starting at $m_{t+1, t}$ and working backward to $m_{t+1, 1}$, then evaluating the identification result \eqref{eq:identification-result}. Figure~\ref{fig:standard-example} illustrates how estimation of $\theta(4)$ proceeds. Another way to illustrate all of required regressions for the full curve starts by seeing that the parameters $m_{t+1,s}$ can be arranged in a
lower diagonal matrix, since $s\leq t$. Each column corresponds to the regressions necessary to estimate one of the parameters. To estimate $\theta(4)$, we start at the diagonal element of the upper left column and work down:
\begin{align}
\label{eq:standard-estimator-matrix}
\underbrace{\begin{bmatrix}
\mathbf{m_{4, 3}} &  & \\
m_{4, 2} & m_{3, 2} & \\
m_{4, 1} & m_{3, 1} & m_{2, 1}\\
\end{bmatrix}}_{\text{Step 1}}\quad\rightarrow\quad \underbrace{\begin{bmatrix}
m_{4, 3} &  & \\
\mathbf{m_{4, 2}} & m_{3, 2} & \\
m_{4, 1} & m_{3, 1} & m_{2, 1}\\
\end{bmatrix}}_{\text{Step 2}}\quad\rightarrow\quad \underbrace{\begin{bmatrix}
m_{4, 3} &  & \\
m_{4, 2} & m_{3, 2} & \\
\mathbf{m_{4, 1}} & m_{3, 1} & m_{2, 1}\\
\end{bmatrix}}_{\text{Step 3}}. 
\end{align}
Estimating $\theta(3)$ and $\theta(2)$ would proceed similarly, estimating the regressions in each of the remaining columns in order. In total, standard methods would require $\tau(\tau - 1) / 2 = 6$ separate regressions to estimate the curve. Such an algorithm is shown in Algorithm~\ref{algo:g-comp-standard}. The algorithm takes as input a ``wide" form dataset in which each observation contains the longitudinal observation from one unit:
\begin{align}
    (L_{i, 1}, A_{i, 1}, \dots, L_{i, \tau}, A_{i, \tau}, L_{i, \tau + 1}, :i\in\{1,\ldots,n\});
\end{align}
recall that each $L_{i, t}$ includes a time-varying outcome $Y_{i, t}$ for $t = 1, \dots, \tau + 1$. 

\begin{algorithm}[H]
    \label{algo:g-comp-standard}
  \caption{Sequential regression g-computation estimator}
  {\small
  \For{$l = 1,\ldots,\tau$} {
    $Z_t^d\gets d(Z_t, H_t)$\;
  }
    
  \For{$l = 1,\ldots,\tau$}
  {
    $\hat m_{l+1,l}\gets \Regress(Y_{l+1}\sim (Z_l, H_l),
    \text{subset} = \{C_l=N_l= R_l=1\})$\;
    $\tilde Y_{l+1,l}\gets \Predict(\hat m_{l+1,l}, \text{data} = \{Z_l^d, H_l\}))$\;
    \uIf{$l > 1$}
    {
      \For{$s = l - 1, \ldots, 1$}
      {
        $\hat m_{t+1,s}\gets \Regress(N_{s+1}\times \tilde Y_{t+1, s+1}\sim ( Z_s, H_s),
        \text{subset} = \{C_s=N_s=1\})$\;
        $\tilde Y_{t+1,s}\gets \Predict(\hat m_{t+1,s}, \text{data} =  \{Z_s^d, H_s\})$\; 
    }
  }
  
  $\thetasr(l+1)\gets \Mean(\tilde Y_{l+1,1})$\;
  }%
  }
\end{algorithm}

\begin{figure}
    \centering
    \begin{tikzpicture}[remember picture]
        \foreach \step in {1,...,3}{
        \pgfmathtruncatemacro{\stepminusone}{\step-1}
        \pgfmathsetmacro{\ycm}{-2.5*\stepminusone}
        \pgfmathsetmacro{\y}{-2.5*\stepminusone + 1}
        \pgfmathtruncatemacro{\target}{4 - \step + 1}
        \pgfmathtruncatemacro{\targetminusone}{4 - \step}

        \ifthenelse{\step > 1}{
            \node[align=left,anchor=south west] at (0, \y) {\textbf{Step \step:} regress $\tilde{Y}_{4,\target}$ on history,};
        }{
            \node[align=left,anchor=south west] at (0, \y) {\textbf{Step \step:} regress $Y_{\target}$ on history,};
        };
        
        \begin{scope}[yshift=\ycm cm,xshift=-0.5cm]
            \lmtpwide{showarrows=true,tau=4,target=\target}
        \end{scope}
        
        \node[align=left,anchor=south west] at (6.5, \y) {predict $\tilde{Y}_{4,\targetminusone}$.};
        \begin{scope}[yshift=\ycm cm,xshift=6cm]
            \lmtpwide{tau=4,target=\targetminusone,showarrows=false}
        \end{scope}
    }
    \node[anchor=south west,align=left] at (0, -6.5) {\textbf{Step 4:} produce final estimate $\hat{\theta}(4) = \frac{1}{n} \sum_{i=1}^n[\tilde{Y}_{4, 1}].$};
    \end{tikzpicture}
    \caption{Illustration of sequential regression algorithm with $\tau = 3$ for the counterfactual mean of the outcome $Y_4$ under a longitudinal modified treatment policy. To estimate a complete longitudinal curve, the same algorithm can be applied $\tau$ times treating respectively $Y_2$, $Y_3$, $Y_4$ as the outcome of interest, yielding the set of effect estimates $\{ \hat{\theta}(2), \hat{\theta}(3), \hat{\theta}(4) \}$.}
    \label{fig:standard-example}
\end{figure}
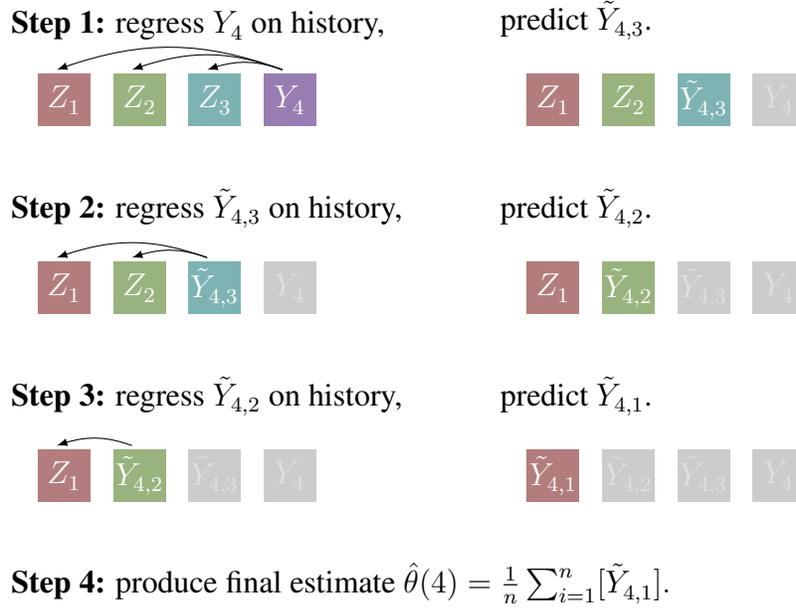

\subsection{Time-smoothed sequential regression estimator}
We propose a modification of the standard sequential regression estimator which works by pooling sets of sequential regressions. The key is to arrange the required sequential regressions in a lower diagonal matrix, as in \eqref{eq:standard-estimator-matrix}, and then estimate the regressions along the diagonals (and subdiagonals) in a single pooled regression. To see why this is desirable, note that the diagonal elements of the matrix correspond to regressions with the same lag between the target outcome $t+1$ and the index $s$. It is perhaps simplest to see the intuition by examining the first diagonal, which corresponds to regressions of the outcome $Y_{t+1}$ against the covariate history at time $t$, for all $t \in \{1, \dots, \tau \}$. The idea is to pool the outcome variables $Y_2$, $Y_3$, and $Y_4$ and the lagged variables $H_1$, $H_2$, and $H_3$, respectively, allowing the regression to detect similar structure in the relationships between the outcome and lagged variables at each time step. Importantly, we also include the variable $t$ in these regressions such that time-dependent relationships between $Y_{t+1}$ and the lagged variables can still be captured as an interaction. An example illustrating the steps of the algorithm is shown in Figure~\ref{fig:smoothed-example}. 

In more detail, the algorithm proceeds by filling in this matrix starting with the
diagonal elements $m_{t+1,t}$, and then proceeding with the
subdiagonal $m_{t+1, t-1}$, the second subdiagonal $m_{t+1, t-2}$, and so
on. This is illustrated in below for the case of $\tau=3$. The algorithm pools data used in the regressions in the diagonal of the matrix, and estimates all diagonal elements with a single regression. This allows for smoothing on the second index of $m_{t+1,s}$.
\begin{equation}
    \underbrace{\begin{bmatrix}
    \mathbf{m_{4, 3}} &  & \\
    m_{4, 2} & \mathbf{m_{3, 2}} & \\
    m_{4, 1} & m_{3, 1} & \mathbf{m_{2, 1}}\\
    \end{bmatrix}}_{\text{Step 1}}\quad\rightarrow\quad
    \underbrace{\begin{bmatrix}
    m_{4, 3} &  & \\
    \mathbf{m_{4, 2}} & m_{3, 2} & \\
    m_{4, 1} & \mathbf{m_{3, 1}} & m_{2, 1}\\
    \end{bmatrix}}_{\text{Step 2}}
    \quad\rightarrow\quad 
    \underbrace{\begin{bmatrix}
    m_{4, 3} &  & \\
    m_{4, 2} & m_{3, 2} & \\
    \mathbf{m_{4, 1}} & m_{3, 1} & m_{2, 1}\\
    \end{bmatrix}}_{\text{Step 3}}
    \label{eq:itera}    
\end{equation}
Fitting these regressions requires considering lagged versions of
the covariates and treatment. Specifically, standard sequential regression approaches conceptualize the recursion (\ref{eq:1}) as a ``wide-form'' dataset in which each outcome at
time $t+1$ is sequentially regressed on its past. For this algorithm, we instead conceptualize it as a ``long-form dataset'' in which every patient
contributes at most $\tau+1$ rows, one for each observed time point. In this dataset, the information for each patient-time is recorded, together with lagged versions of the covariates. Scalable implementation of the estimator may require a Markov independence assumption stating that $(A_{t+1}, H_{t+1})\indep (A_{t-k-1}, H_{t-k-1})\mid (A_{t-k},H_{t-k},\ldots A_t,H_t)$ for some $k$, where we note that setting $k=\tau$ means no independence restriction, and recalling that $A_t=(Z_t,R_t,C_t)$. This assumption allows us to run regressions using only the last $m$ predictors. Specifically, we create the following dataset:
\begin{align}
    (t, Y_{i,t+1}, A_{i,t}, L_{i,t}, A_{i,t-1}, L_{i,t-1}, \ldots, A_{i,t-k}, L_{t-k}, :i\in\{1,\ldots,n\}, t\in\{1,\ldots,T_i\}),
\end{align}
where we denote with $T_i\in\{1,\ldots, \tau\}$ the last time a patient is seen, i.e., the first time $t$ such that $C_{i,t}=0$. We also create an
additional data column indicating the time $t$ of observation of each
record (a patient's record is ended once the patient becomes lost to
follow up). Since we will fit all regressions non-parametrically and include  $t$ as a predictor, without loss of generality we define $A_{i,t-l}, L_{t-l}$ equal to a constant (e.g., zero) if $t-l<0$. We denote this dataset with $\mathcal D_{k}$. The proposed smoothed sequential regression algorithm based is given in Algorithm~\ref{algo:g-comp}.

\begin{algorithm}[H]
  \caption{Time-smoothed sequential regression g-computation estimator}\label{algo:g-comp}
  {\small
    $Z_t^d\gets d(Z_t, H_t)$\;
$\hat m_{t+1,t}\gets \Regress(Y_{t+1}\sim (t, Z_t, H_t),
    \text{subset} = \{C_t=N_t= R_t=1\})$\;
        $\tilde Y_{t+1,t}\gets \Predict(\hat m_{t+1,t}, \text{data} = \{t, Z_t^d, H_t\}))$\;
  \For{$l = 1,\ldots,\tau$}
  {
  $\thetasr(l+1)\gets \Mean(\tilde Y_{l+1,1})$\;
  \uIf{$l <\tau$} {%
  $s\gets t-l$\;
  $\hat m_{t+1,s}\gets \Regress(N_{s+1}\times \tilde Y_{t+1, s+1}\sim (s, Z_s, H_s),
    \text{subset} = \{s > 0, C_s=N_s=1\})$\;
 $\tilde Y_{t+1,s}\gets \Predict(\hat m_{t+1,s}, \text{data} =  \{s, Z_s^d, H_s\}, \text{subset} = \{s > 0\})$\;
 }
  }  
  }%
\end{algorithm}

A simple comparison of Algorithms~\ref{algo:g-comp-standard} and \ref{algo:g-comp} shows that Algorithm~\ref{algo:g-comp-standard} requires $\tau(\tau - 1)/2$ regressions, while Algorithm~\ref{algo:g-comp} requires only $\tau$ regressions, suggesting the latter algorithm may be more computationally efficient. The trade-off is that the regressions necessary in Algorithm~\ref{algo:g-comp} will involve data sets (in ``long" form) with more rows compared to the original algorithm (which uses data in the ``wide" form). 

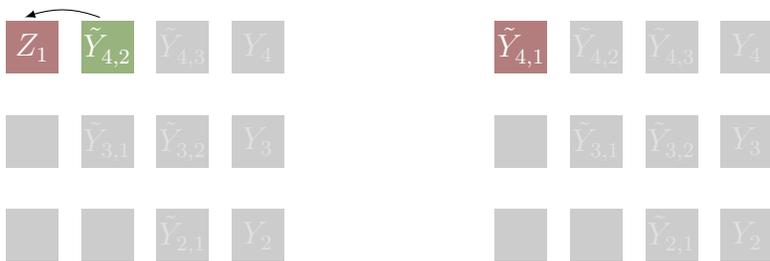
\begin{figure}
    \centering
    \begin{tikzpicture}[]
        \begin{pgfonlayer}{background}
            \draw[->] (4.75, -0.65) to (6.5, -0.65);
        \end{pgfonlayer}

        \node[align=left,anchor=south west] at (0, 0) {\textbf{Step 1:} convert data to augmented ``wide" format.};
        \begin{scope}[yshift=-1cm,xshift=-0.5cm]
            \lmtpwide{tau=4,target=4,showarrows=false}
        \end{scope}

        \begin{scope}[yshift=0.25cm,xshift=6cm]
            \lmtplong{tau=4,,target=4,showarrows=false}
        \end{scope}
        
        \foreach \step in {1,...,3}{
            \pgfmathtruncatemacro{\stepplusone}{\step+1}
            \pgfmathtruncatemacro{\stepminusone}{\step-1}
            \pgfmathtruncatemacro{\stepminustwo}{\step-2}
            \pgfmathsetmacro{\ycm}{-4.8*\step}
            \pgfmathsetmacro{\y}{-4.8*\step}
            \pgfmathtruncatemacro{\target}{4 - \step + 1}
            \pgfmathtruncatemacro{\targetminusone}{4 - \step}o

            \def\regressvar{}
            \def\predictvar{}
            \ifnum \step=1
                \def\regressvar{$Y_{t+1}$}
                \def\predictvar{$\tilde{Y}_{t+1,t}$}
            \fi
            
            \ifnum \step=2
                \def\regressvar{$\tilde{Y}_{t+1, t}$ }
                \def\predictvar{$\tilde{Y}_{t+1, t-1}$}
            \fi

            \ifnum \step>2
                \def\regressvar{$\tilde{Y}_{t+1, t-\stepminustwo}$ }
                \def\predictvar{$\tilde{Y}_{t+1, t-\stepminusone}$ }
            \fi
            
            \node[align=left,anchor=south west] at (0, \y) {\textbf{Step \stepplusone:} regress \regressvar on history,};
            \begin{scope}[yshift=\ycm cm,xshift=-0.5cm]
                \lmtplong{tau=4,target=\target,showarrows=true}
            \end{scope}
            
            \node[align=left,anchor=south west] at (6.5, \y) {predict \predictvar,};
            \begin{scope}[yshift=\ycm cm,xshift=6cm]
                \lmtplong{tau=4,target=\targetminusone,showarrows=false}
            \end{scope}
            
            \node[anchor=south west,align=left] at (11.5, \y) {set $\hat{\theta}(\stepplusone) = \frac{1}{n} \sum_{i=1}^n[\tilde{Y}_{\stepplusone, 1}].$};
        }
    \end{tikzpicture}
    \caption{Illustration of time-smoothed sequential regression algorithm with $\tau = 3$ for estimating the counterfactual mean of $Y_2$, $Y_3$, and $Y_4$ under a modified treatment policy. The algorithm yields estimates $\hat{\theta}(2)$, $\hat{\theta}(3)$, and $\hat{\theta}$(4), which together comprise what we call a longitudinal curve. }
    \label{fig:smoothed-example}
\end{figure}

\subsection{Time-smoothed sequentially doubly robust estimator}
\label{sec:sdr-curve}
While the time-smoothed sequential regression estimator (Algorithm~\ref{algo:g-comp-standard}) incorporates the key innovation of pooling sets of sequential regressions, its statistical properties, such as its sampling distribution are unknown. In this section, we form a time-smoothed sequential doubly robust (SDR) algorithm using doubly-robust transformations to yield an improved estimator that we show is asymptotically normal and efficient under weak conditions on nuisance parameter convergence rates.

The SDR algorithm requires estimation of an additional set of nuisance parameters characterizing the longitudinal treatment assignment process. Formally, for a categorical exposure $Z_t$ we let 
\[g^d_{Z,t}(z_t\mid h_t)=\sum_{z'_t} \one\{d(z_t', h_t) = z_t\}g_{Z,t}(z_t'\mid h_t).\]
denote the probability that $d(Z_t,H_t)=z_t$ conditional on
$\{H_t=h_t,C_{t-1}=1\}$, where we let $g_Z(z_t\mid h_t)$ denote the probability that $Z_t=z_t$ conditional on
$\{H_t=h_t,C_{t-1}=1\}$. If the exposure is continuous, we require the following assumption, originally introduced by \cite{Haneuse2013}:
\begin{assumption}[Piecewise smooth invertibility for continuous exposures]\label{ass:inv}
  For each $h_t$, assume that the support of $Z_t$ conditional on
  $H_t=h_t$ may be partitioned into subintervals
  ${\cal I}_{t,j}(h_t):j = 1, \ldots, J_t(h_t)$ such that
  $d(z_t, h_t)$ is equal to some $d_j(z_t, h_t)$ in
  ${\cal I}_{t,j}(h_t)$ and $d_j(\cdot,h_t)$ has inverse function
  $b_j(\cdot, h_t)$ with derivative $b_j'(\cdot, h_t)$ with
  respect to $a_t$.
\end{assumption}
Under \ref{ass:inv}, the function defined as 
\begin{equation}\label{eq:gdelta}
  g_{Z,t}^d(z_t \mid h_t) =
  \sum_{j=1}^{J_t(h_t)} \one_{t, j} \{b_j(z_t, h_t), h_t\} g_{Z,t}\{b_j(z_t, h_t)\mid h_t\}
  |b_j'(z_t,h_t)|,
\end{equation}
is the probability density function of $d(Z_t,H_t)$ conditional on
$H_t=h_t$ evaluated at $z_t$, where $g_{Z,t}(z_t\mid h_t)$ denotes the probability density function of $Z_t$ conditional on
$\{H_t=h_t,C_{t-1}=1\}$. Define the following probability reweighting function
\[w_{t,s}(A_s,
  H_s)=r_{Z,s}(Z_s,H_s)\frac{I(C_s=1)}{g_{C,s}(Z_s, H_s)}\left[\frac{I(R_s=1)}{g_{R,s}(Z_s,H_s)}\right]^{I(s=t)},\]
    where $r_Z$ is the density ratio
    \[r_{Z,s}(Z_s,H_s)=\frac{g^d_{Z,s}(Z_s\mid
    H_s)}{g_{Z,s}(Z_s\mid H_s)},\]
$g_{C,t}(z_t,h_t) = P(C_t=1\mid Z_t=z_t,H_t=h_t)$, and
$g_{R,t}(z_t,h_t) = P(R_t=1\mid Z_t=z_t,H_t=h_t, C_{t-1}=1)$.  Define the nuisance parameter
$\eta_{t+1} = (w_{t,1},m_{t+1, 1}, \ldots, w_{t,t},m_{t+1, t})$. 

The foundation of the SDR algorithm lies in defining a data transformation that is doubly-robust in a suitable sense. Specifically, for $s=1,\ldots, t$, define the following data transformation, which is similar in form to the efficient influence function:
\[\varphi_{t+1,s}:x\mapsto \sum_{k=s}^t\left(\prod_{l=s}^kw_{t,l}(a_l,
  h_l)\right)\{n_{k+1}\cdot m_{t+1, k+1}(z_{k+1}^d, h_{k+1}) - m_{t+1, k}(z_{k}, h_{k})\}
  + m_{t+1, s}(z_{s}^d, h_{s}),\]
where we have denoted $z_s^d=d(z_s, h_s)$ and $m_{t+1, t+1}=Y_{t+1}$. In prior work \cite[see][]{diaz2024causal,diaz2023nonparametric,luedtke2017sequential,rotnitzky2017multiply} it has been shown that this
data transformation is doubly robust in the sense that
\[E[N_{s+1}\times \varphi_{t+1,s+1}(X;\eta_t')\mid C_{s}= N_{s}=1, Z_{s} = z_{s}, H_{s}=h_{s}] = m_{t+1, s}(z_{s},
  h_{s})\]
for all $s<t$ whenever $\eta_{t+1}'$ is such that, for each $k<t$, we have either $r_{t,k}'=r_{t,k}$, or $m_{t+1,k}'=m_{t+1,k}$. This is desirable because it implies that the expected value of the transformation equals the sequential regression $m_{t+1,s}$ even if some of the nuisance parameters are estimated inconsistently. 

These considerations lead to the following estimation algorithm, in which $\eta_{t+1}'$ is replaced for an estimate. This algorithm is identical to Algorithm \ref{algo:g-comp} with the difference that we have use the pseudo-outcome $\varphi_{t+1,s+1}(X;\eta_t)$ to obtain an estimate of $m_{t+1, s}(z_s,h_s)$. The other important difference is that all regressions should be cross fitted. Under conditions, the above result guarantees that this estimate of $m_{t+1, s}$ is sequentially doubly robust, therefore leading to sequentially doubly robust estimators for the curve $t\mapsto \theta(t)$.

A weakness of the original SDR algorithm is that the estimated sequential regressions $\hat{m}_{t+1, s}$ are not constrained to lie in the support of the outcomes. For example, if the outcomes are binary, there is no guarantee that the range of $\hat{m}_{t+1, s}$ will fall between $0$ and $1$. This may lead to higher variance in the final estimates or point estimates that fall outside the known parameter space. To address this issue, we propose constraining the sequential regressions at each iteration by projecting them onto the space of functions that satisfy a given constraint using isotonic regression. Suppose we have an initial estimate $\hat{m}_{t+1, s}$ of the sequential regression $m_{t+1, s}$ formed by regressing the estimated pseudo-outcomes $\hat{\varphi}_{t+1, s}$ on the suitable set of covariates. We solve the following optimization problem
\begin{align}
    \hat g = \argmin_{g \in \G}\sum_{i  = 1}^n[\hat{\varphi}_{i, t+1,s} - g \circ \hat{m}_{t+1, s}(Z_i)]^2,
\end{align}
where $\G$ is the space of non-decreasing monotone functions $g : \mathbb{R} \mapsto [0,1]$.
An updated estimate of the sequential regression guaranteed to lie in the space of functions with range in $[0,1]$ can then be formed by
\begin{align}
    \tilde{m} (z)= \hat g \circ \hat{m}(z),
\end{align}
where the notation $g \circ f$ denotes function composition.
Practically speaking, the above constraint can be applied with straightforward application of isotonic regression.

\begin{algorithm}[H]
  \caption{Pooled sequentially doubly robust estimator using time-smoothing}\label{algo:sdr}
  {\small
    $Z_{t}^d\gets d(Z_{t}, H_{t})$\;
    $\mathcal V\gets \SplitData(n)$\;
$\hat m_{t+1,t}\gets \CrossFit(Y_{t+1}\sim (t, Z_{t}, H_{t}), \text{subset} = \{C_{t}=N_{t}= R_{t}=1\}, \mathcal V)$\;
    $\hat g_{C,t}\gets \CrossFit(C_t\sim (t, Z_t, H_t), \mathcal V)$\;
        $\hat g_{R,t}\gets \CrossFit(R_t\sim (t, Z_t, H_t), \text{subset} = \{C_{t-1}=1\}, \mathcal V)$\;
    $\hat r_{Z,t}\gets \EstimateDensityRatio(Z_t, H_t, \text{subset} = \{C_{t-1}=1\}, \mathcal V)$\;
        $\hat \varphi_{t+1,t}\gets \ComputePseudoOutcome(\hat m_{t+1,t}, \hat g_{C,t}, \hat g_{R,t}, \hat r_{Z,t})$\;
  \For{$l = 1,\ldots,\tau$}
  {
  $\thetasdr(l+1)\gets \Mean(\hat \varphi_{l+1,1})$\;
   \uIf{$l <\tau$} {%
  $s\gets t-l$\;
  $\hat m_{t+1,s}\gets \CrossFit(N_{s+1}\times \hat \varphi_{t+1,s+1}\sim (s, Z_s, H_s),
    \text{subset} = \{s > 0, C_s=N_s=1\}, \mathcal V)$\;
    $\tilde \m_{t+1,s} \gets \Constrain(\hat \m_{t+1,s})$\;
    $\hat \varphi_{t+1,s}\gets \ComputePseudoOutcome(\tilde m_{t+1,s}, \hat g_{C,s}, \hat g_{R,s}, \hat r_{Z,s})$\;
 }
  }  
}%
\end{algorithm}


The use of the double-robust transformation yields an estimator that converges to a normal distribution centered on the true parameter value and with efficient variance. 
\begin{theorem}[Weak convergence of SDR estimator]\label{theo:weak} Define the data-dependent parameter
\[\check m_{t+1,s}(z_s, h_s)=\E[N_{s+1}\times \varphi_{t+1,s+1}(X;\hat \eta_t)\mid C_{s}= N_{s}=1, Z_{s} = z_{s}, H_{s}=h_{s}].\]
    Assume that, for each $t,s$ $||\hat w_{t,s} - w_{t,s}||\times ||\hat m_{t+1,s} - \check m_{t+1,s}||=o_P(n^{-1/2})$. Assume also that $P(w_{t,s}(A_s, H_s) < c) = P(\hat w_{t,s}(A_s, H_s) < c)=1$ for some $c<\infty$. Then, for the vector $\thetasdr = (\thetasdr(2), \ldots, \thetasdr(\tau+1))$ we have
    \[\sqrt{n}(\thetasdr - \theta)\rightsquigarrow N(0,\Sigma),\]
    where the $(t,s)$ entry of $\Sigma$ is $\mathrm{Cov}[\varphi_{t+1, 1}(X;\eta_t), \varphi_{s+1, 1}(X;\eta_s)]$. In addition, $\Sigma$ is local asymptotic minimax
  efficiency bound for estimation of $\theta = (\theta(2), \ldots, \theta(\tau+1))$ in the sense that, for
  any estimator sequence $\hat\theta_n$:
  \[\inf_{\delta
      >0}\liminf_{n\to\infty}\sup_{Q:V(Q-P)<\delta}n\E\{\hat\theta_n
    - \theta(Q)\}^2\geq \diag\{\Sigma_P\},\] where
  $V(\cdot)$ is the variation norm, $\E$ denotes expectation,
  and $\geq$ denotes element-wise inequality. We added indices $P$
  and $Q$ to emphasize sampling under $P$ or $Q$, and used notation
  $\theta(Q)$ to denote the parameter computed at an arbitrary
  distribution.
\end{theorem}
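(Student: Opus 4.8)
The plan is to establish that each coordinate of $\thetasdr$ is asymptotically linear with the efficient influence function of the corresponding $\theta(t+1)$, and then to pass from the individual expansions to the joint weak convergence via the Cram\'er--Wold device; efficiency will follow from the local asymptotic minimax theorem once the limiting influence functions are identified. As a preliminary I would record that, at the true nuisance value, the doubly robust property recalled above together with the tower rule give $\P[(\prod_{l=s}^k w_{t,l})\{N_{k+1}m_{t+1,k+1}(Z^d_{k+1},H_{k+1}) - m_{t+1,k}(Z_k,H_k)\}] = 0$ for each $k$, so that $\P\varphi_{t+1,1}(X;\eta_t) = \P m_{t+1,1}(d(Z_1,H_1),H_1) = \theta(t+1)$ by the identification result \eqref{eq:identification-result}; hence $x\mapsto\varphi_{t+1,1}(x;\eta_t) - \theta(t+1)$ is the candidate influence function for $\thetasdr(t+1)$.

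For the coordinate-wise expansion I would fix $t$, write $\thetasdr(t+1) = \Pn\varphi_{t+1,1}(X;\hat\eta_t)$ with $\hat\eta_t$ the cross-fitted nuisances produced by Algorithm~\ref{algo:sdr}, and decompose
\begin{align}
  \thetasdr(t+1) - \theta(t+1)
  = \underbrace{(\Pn - \P)\varphi_{t+1,1}(X;\eta_t)}_{(\mathrm i)}
  + \underbrace{(\Pn - \P)\{\varphi_{t+1,1}(X;\hat\eta_t) - \varphi_{t+1,1}(X;\eta_t)\}}_{(\mathrm{ii})}
  + \underbrace{\P\varphi_{t+1,1}(X;\hat\eta_t) - \theta(t+1)}_{(\mathrm{iii})}.
\end{align}
Term $(\mathrm i)$ is the sought linear term. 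Term $(\mathrm{ii})$ I would dispatch with the standard cross-fitting argument \citep{chernozhukov2016double}: conditionally on each training fold $\hat\eta_t$ is fixed, so the fold's contribution is a centred empirical process of a single function with second moment of order $\|\varphi_{t+1,1}(\cdot;\hat\eta_t) - \varphi_{t+1,1}(\cdot;\eta_t)\|^2$, which is $o_P(1)$ under $L_2$-consistency of the cross-fitted nuisances (a mild requirement given the bounded weights), so that $(\mathrm{ii}) = o_P(n^{-1/2})$. Term $(\mathrm{iii})$ is the crux: by backward recursion on $s = t, t-1, \ldots, 1$, repeatedly using the tower rule, the recursive form of $\varphi_{t+1,s}$, and the fact that $\check m_{t+1,s}$ is by construction the population target of the regression that produces $\hat m_{t+1,s}$, I would derive a telescoping representation $\P\varphi_{t+1,1}(X;\hat\eta_t) - \theta(t+1) = \sum_{s=1}^{t}\Rem_{t,s}$ with $|\Rem_{t,s}| \le C(c)\,\|\hat w_{t,s} - w_{t,s}\|\,\|\hat m_{t+1,s} - \check m_{t+1,s}\|$; this is the familiar sequential-double-robustness remainder bound \citep{luedtke2017sequential,rotnitzky2017multiply,diaz2023nonparametric}, and by the first displayed assumption each $\Rem_{t,s} = o_P(n^{-1/2})$. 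Combining the three pieces yields $\sqrt n(\thetasdr(t+1) - \theta(t+1)) = \sqrt n(\Pn - \P)\varphi_{t+1,1}(X;\eta_t) + o_P(1)$.

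Next I would check that the constraining (isotonic regression) step of Algorithm~\ref{algo:sdr} is innocuous: the recalibration $\tilde m_{t+1,s} = \hat g\circ\hat m_{t+1,s}$ projects the initial fit onto the monotone functions valued in the bounded outcome space, a class that contains the population target $\check m_{t+1,s}$, so it inherits the $L_2$ rate of $\hat m_{t+1,s}$ up to the one-dimensional isotonic estimation error of order $n^{-1/3}$, which is negligible for the product-rate assumption \citep{pmlr-v202-van-der-laan23a}; thus the assumption of the theorem, read with $\hat m_{t+1,s}$ denoting the constrained regression actually fed into the pseudo-outcome, is unaffected and the expansion holds verbatim. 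For the joint statement I would invoke the Cram\'er--Wold device: each coordinate being asymptotically linear with a square-integrable mean-zero influence function, the univariate central limit theorem applied to the i.i.d.\ variables $\sum_t c_t\{\varphi_{t+1,1}(X_i;\eta_t) - \theta(t+1)\}$ gives $\sum_t c_t\sqrt n(\thetasdr(t+1) - \theta(t+1)) \rightsquigarrow N(0, c^\top\Sigma c)$ for every $c\in\R^\tau$, with $\Sigma_{t,s} = \mathrm{Cov}[\varphi_{t+1,1}(X;\eta_t), \varphi_{s+1,1}(X;\eta_s)]$, hence $\sqrt n(\thetasdr - \theta) \rightsquigarrow N(0,\Sigma)$. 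Finally, since $\varphi_{t+1,1}(\cdot;\eta_t) - \theta(t+1)$ is the nonparametric efficient influence function of $\theta(t+1)$ (from the efficiency theory for LMTP parameters, \citealp{diaz2023nonparametric}), each diagonal entry $\Sigma_{t,t}$ is the nonparametric efficiency bound, and the stated local asymptotic minimax inequality follows by applying the local asymptotic minimax theorem coordinate by coordinate over variation-norm neighbourhoods, the model being locally nonparametric and $\theta$ pathwise differentiable with the displayed gradient.

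The step I expect to be the main obstacle is term $(\mathrm{iii})$: producing the telescoping second-order representation requires unwinding the $t - s + 1$ nested sequential regressions while keeping track of the fact that each $\hat m_{t+1,s}$ targets the \emph{data-dependent} parameter $\check m_{t+1,s}$ rather than the population $m_{t+1,s}$, so that the cancellations turning the drift into products of nuisance errors still occur; the bounded-weights assumption is exactly what prevents the accumulated cross terms from exploding in this recursion.
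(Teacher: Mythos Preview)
Your proposal is correct and follows essentially the same route as the paper's proof: the identical three-term decomposition into a linear term, a cross-fitted empirical-process term, and a bias term controlled by the sequential double-robustness product bounds from \cite{diaz2023nonparametric}, with joint convergence via Cram\'er--Wold (the paper cites \citealt[Theorem 18.10]{van2002part}) and efficiency via the local asymptotic minimax theorem. The paper packages the isotonic-calibration rate preservation you invoke as its Lemma~1, which it then feeds into the bias-term analysis exactly as you outline.

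One small imprecision worth fixing: the isotonic recalibration does \emph{not} project onto a class containing $\check m_{t+1,s}$ --- the class is $\{g\circ\hat m_{t+1,s}:g\in\mathcal G\}$, which generally does not contain $\check m_{t+1,s}$. The correct reason the rate is preserved (and the content of the paper's Lemma~1) is that the identity belongs to $\mathcal G$, so the population-best element of this class is at least as close to the target as $\hat m_{t+1,s}$ itself, and the empirical isotonic minimizer differs from that best element by $O_P(n^{-1/3})$ plus a term of the size of the second-order remainder.
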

The proof is given in the appendix. We can use this theorem to construct pointwise confidence intervals and uniform confidence bands on the function $\theta(t)$. A pointwise $(1-\alpha)\times100\%$ confidence interval for $\hat{\theta}_{\mathrm{sdr},t}$, $t \in \{2, \dots, \tau + 1\}$ is given by 
\begin{align}
    \hat{\theta}_{sdr, t} \pm q_{1-\alpha} \times \sqrt{\hat{\sigma}^2 \slash n},
\end{align}
where $\hat\sigma^2(t)$ denotes the empirical
variance of $\varphi_{t+1, 1}(X;\hat \eta_t)$. 

To form a uniform confidence band, first find a value $c_\alpha$ such that
$\hat \rho(c_\alpha) = 1 - \alpha$, where $\hat\rho$ is a function
satisfying
\begin{equation}\label{eq:apmb}
  \hat\rho(s) = \prob\left(\max_{t} \bigg|
    \frac{\thetasdr(t) - \theta(t)}{\hat\sigma(t) /
    \sqrt{n}}\bigg| \leq s \right) + o_\P(1).
\end{equation}
Confidence bands can then be computed as $\thetasdr(t) \pm
n^{-1/2}c_\alpha\hat\sigma(t)$. To approximate the function $\hat\rho(t)$, we use the multiplier
bootstrap \citep{gine1984some,vanderVaart&Wellner96,chernozhukov2013gaussian,
  belloni2018inference}, which has the advantage that it does not require the evaluation of
large covariance matrices nor integration of multivariate normal distributions, and therefore is  more computationally efficient and convenient than approximating (\ref{eq:apmb}) directly. 

The multiplier bootstrap approximates the distribution of
the $\max$ in (\ref{eq:apmb}) with the maximum of the process
$${\mathbb{M}}(t) = \frac{1}{\sqrt{n}}\sum_{i = 1}^n \frac{\xi_i
  \{\varphi_{t+1,1}(X_i;\hat\eta_t) - \thetasdr(t)\}}
  {\hat\sigma(t)},$$
where randomness is introduced through sampling the multipliers
$(\xi_1, \ldots, \xi_n)$, despite the process being conditional on the observed
data $X_1, \ldots, X_n$. The multiplier variables are i.i.d.~with mean zero and
unit variance, and are drawn independently from the sample. Typical choices are
Rademacher ($\prob(\xi = -1) = \prob(\xi=1)=0.5$) or Gaussian multipliers. Under
the assumptions of Theorem~\ref{theo:weak}, plus uniform consistency of
$\hat\sigma(t)$, it can be shown that (\ref{eq:apmb}) holds for
$$\hat\rho(s) = \prob \left(\max_{t} \big|{\mathbb{M}}(t)
  \big|\leq s\,\bigg|\, X_1, \ldots, X_n\right).$$
As a consequence, computation of the critical value requires simulation of a large number of realizations of the multipliers.

\section{Simulation Studies}
\label{sec:simulations}

Simulation studies were conducted on a shared high-performance computing cluster. In comparing runtimes of alternative algorithms, all algorithms were executed on compute nodes configured with 2x 2.4GHZ 20-core Skylake 6148 processors and 12x 32GB DDR4 memory. The SDR curve algorithm is available in the open-source \texttt{lmtp} \texttt{R} package at \url{https://github.com/nt-williams/lmtp/tree/curve} \citep{williams2023lmtp}. Reproduction materials for the simulation studies is available at \url{https://github.com/herbps10/effect\_curve\_paper}. 

\subsection{Simulation Study 1}
For the first simulation study, we extend the simulation data-generating process of \cite{diaz2023nonparametric} to include time-varying outcomes with sporadic missingness. The data-generating process for the time-varying covariates and treatments are as in \cite{diaz2023nonparametric}:
\begin{align}
    L_1 &\sim \mathrm{Categorical}\left(0.5, 0.25, 0.25\right), \\
    A_1 \mid L_1 &\sim \mathrm{Binomial}\left(5, 0.5 \times \1(L_1 > 1) + 0.1 \times \1(L_1 > 2) \right), \\
    L_t \mid \left( \bar{A}_{t-1}, \bar{L}_{t-1} \right) &\sim \mathrm{Bernoulli}\left( \mathrm{logit}^{-1}\left( -0.3 L_{t-1} + 0.5 A_{t-1} \right) \right), \text{ for } t \in \{2, 3, 4\}, \\
    A_t \mid \left( \bar{A}_{t-1}, \bar{L}_t \right) &\sim \begin{cases}
        \mathrm{Binomial}\left( 5,  \mathrm{plogis}\left( -2 + 1 / (1 + 2L_t + A_{t-1} \right) \right) \text{ for } t \in \{2, 3\}, \\
        \mathrm{Binomial}\left( 5, \mathrm{plogis}\left( 1 + L_t - 3 A_{t-1} \right) \right) \text{ for } t \in \{ 4 \}, \\
    \end{cases},
\end{align}
where $X \sim \mathrm{Categorical}(p_1, p_2, p_3)$ denotes the categorical distribution with $P(X = i) = p_i$ for $i \in \{1, 2, 3\}$, $X \sim \mathrm{Bernoulli}(p)$ is the Bernoulli distribution with probability $p$, and $\mathrm{Binomial}(n, p)$ is the Binomial distribution with $n$ trials and probability of success $p$. 
Time-varying outcomes $Y_t$ and binary sporadic missingness indicators $R_t$ are simulated according to
\begin{align}
    Y_t \mid \left( \bar{A}_t, \bar{L}_t \right) &\sim \mathrm{Bernoulli}\left( \mathrm{logit}^{-1}\left( -2 + 1/(1 - 1.2A_t - 0.3L_t \right) \right) \text{ for } t \in \{2, \dots, 5 \}, \\
    R_t \mid \left( \bar{A}_t, \bar{L}_t \right) &\sim \mathrm{Bernoulli}\left( 1 - \mathrm{logit}^{-1}\left( \mathrm{logit}(\alpha) + 2\times\1(L_{t-1} = 1) - 1 \right) \right) \text{ for } t \in \{2, \dots, 5 \}.
\end{align}
The simulation parameter $\alpha$ controls the base level of missingness in the outcome at each time point, with higher levels of $\alpha$ implying higher probability of missing outcome data. 

We generated $N = 500$ datasets for every combination of sample sizes $n \in \{ 500, 1000, 2500, 5000 \}$ and sporadic outcome missingness base probabilities $\alpha \in \{0, 0.5, 0.8 \}$. For each simulated dataset, we compared the proposed time-smoothed sequential doubly robust estimator to a benchmark. The benchmark approach was to apply the standard sequential doubly-robust estimator for a single outcome multiple times, once for each of the time-varying outcomes. (We note that this benchmark was implemented naively, in the sense that all nuisance parameters  were reestimated when computing the effect for each of the time-varying outcomes. A more computationally efficient alternative may be to estimate all the required density ratio nuisance parameters in a first pass, and then re-use them for estimating each of the $\tau$ effects). Both the time-smoothed and benchmark approaches were used to produce point estimates and pointwise 95\% confidence intervals for each of the causal effects that comprise the target curve parameter. We also produced uniform 95\% confidence bands for the estimates from both approaches using the described multiplier bootstrap method with $1000$ bootstrap draws. For each algorithm, all nuisance parameters were estimated using an ensemble of gradient boosting machines (\texttt{lightgbm}; \citealt{shi2024lightgbm}) with 25, 50, and 100 boosting iterations. 

We compared the proposed method to the benchmark in terms of median error (ME), median absolute error (MAE) of the point estimates of the estimated curves, averaging across time-points. We also report the average empirical coverage of the pointwise confidence intervals and the empirical coverage of the uniform confidence bands. To compare the computational efficiency of the methods, we report the relative wall-clock run time of our proposed method relative to the benchmark.

The results of the simulation study are shown in Table~\ref{tab:simulation-study-1}. In all simulations, the average wall-clock runtime of the proposed SDR curve estimator was less than that of the benchmark. For low and moderate levels of sporadic missingness ($\alpha = 0$ and $\alpha = 0.5$) the statistical properties of the SDR curve estimator and the benchmark method are similar in terms of median error and absolute error; however, for high levels of missingness ($\alpha = 0.8$) our proposed algorithm had lower variance than the benchmark, suggesting that time-smoothing yields finite-sample efficiency gains in this setting. In all scenarios and for both algorithms, empirical coverage of the pointwise confidence intervals and uniform confidence bands were near the nominal 95\% level; detailed results are included in the Supplemental Material. 

\begin{table}[]
    \centering
    \begin{tabular}{|lrrrrrr|}
    \hline
    & & \multicolumn{2}{c}{ME $\times 100$} & \multicolumn{2}{c}{MAE $\times 100$} & Relative \\
    $\alpha$ & $n$ & SDR Curve & Benchmark & SDR Curve & Benchmark & Runtime \\
    \hline
    0 & 500 & 0.08 & \bf{0.02} & \bf{1.43} & 1.45 & 0.44\\
     & 1000 & 0.21 & \bf{0.18} & \bf{0.98} & 1.00 & 0.45\\
     & 2500 & 0.22 & \bf{0.19} & \bf{0.64} & 0.64 & 0.46\\
     & 5000 & 0.18 & \bf{0.16} & \bf{0.46} & 0.46 & 0.49\\
    0.5 & 500 & \bf{0.11} & 0.15 & \bf{2.39} & 2.61 & 0.58\\
     & 1000 & \bf{0.04} & 0.05 & \bf{1.65} & 1.90 & 0.60\\
     & 2500 & \bf{0.16} & 0.18 & \bf{1.04} & 1.09 & 0.61\\
     & 5000 & 0.15 & \bf{0.11} & \bf{0.72} & 0.73 & 0.63\\
    0.8 & 500 & -0.69 & \bf{-0.45} & \bf{4.84} & 5.90 & 0.58\\
     & 1000 & \bf{0.06} & 0.13 & \bf{3.18} & 4.08 & 0.59\\
     & 2500 & \bf{-0.06} & -0.10 & \bf{1.93} & 2.24 & 0.62\\
     & 5000 & \bf{0.13} & 0.13 & \bf{1.23} & 1.33 & 0.63\\
    \hline
    \end{tabular}
    \caption{Results of Simulation Study~1 comparing the proposed time-smoothed SDR algorithm (SDR curve; \ref{sec:sdr-curve}) against the benchmark in terms of median error (ME), median absolute error (MAE), and relative runtime. The median error and median absolute error that are closest to zero for each simulation setting are bolded.}
    \label{tab:simulation-study-1}
\end{table}

\subsection{Simulation Study 2}

The second simulation study is designed to investigate the effect of the novel isotonic constraint step in the pooled SDR algorithm. The simulation data-generating process includes two baseline covariates $W$ and $X$ both drawn from a standard normal distribution. Then the following time-varying covariates, treatment variables, and outcomes are generated:
\begin{align}
    L_t &\sim \mathrm{Binomial}(0.5), \text{ for } t \in \{1, \dots, 4 \}, \\
    A_t &\sim \mathrm{Binomial}\left(\mathrm{logit}^{-1}(\alpha(W + X)) \right), \text{ for } t \in \{1, \dots, 4 \} \\
    Y_t &\sim \mathrm{Binomial}\left(\mathrm{logit}^{-1}(-3 + W + A_{t-1} \times X) \right), \text{ for } t \in \{2, \dots, 5 \},
\end{align}
where the parameter $\alpha$ controls the strength of the confounding effect of the baseline covariates.

We generated $N = 500$ datasets for every combination of sample sizes $n \in \{ 500, 1000, 2500, 5000 \}$ and confounding strengths $\alpha \in \{0, 1.5, 3 \}$. In the case of binary outcomes, the time-smoothed sequential doubly-robust estimator estimator proposed in Section~\ref{sec:sdr-curve} includes a step to constrain the range of the estimated sequential regressions to fall in $[0, 1]$ (line 13, Algorithm~\ref{algo:sdr}). We applied this algorithm to each of the simulated datasets, in addition to a simple alternative ``unconstrained" algorithm that does not include this constraint step. We also applied the benchmark estimator described in the previous simulation study that applies the standard SDR estimator for a single outcome successively to each of the time-varying outcomes. Nuisance parameters were estimated as in the previous simulation study. 

The results of Simulation Study 2 are shown in Table~\ref{tab:simulation-study-2}, which compares the three algorithms in terms of their median error and median absolute error. When there is no confounding ($\alpha = 0$) the three methods perform similarly in terms of median error and absolute error. As confounding increases, the SDR curve algorithm has lower median absolute error than the alternatives, suggesting that in this scenario the inclusion of the constraint step stabilizes the variance of the estimator. 

\begin{table}[]
    \centering
    \begin{tabular}{|lrrrrrrr|}
    \hline
    & & \multicolumn{3}{c}{ME $\times 100$} & \multicolumn{3}{c|}{MAE $\times 100$} \\
    & & Unconstrained & & & Unconstrained & & \\
    $\alpha$ & $n$ & SDR Curve & SDR Curve & Benchmark & SDR Curve  & SDR Curve & Benchmark \\
    \hline
    0 & 500 & -0.35 & \bf{-0.32} & -0.49 & 2.65 & \bf{2.22} & 2.70\\
     & 1000 & -0.20 & \bf{-0.07} & -0.20 & 1.96 & \bf{1.61} & 1.84\\
     & 2500 & -0.05 & \bf{0.02} & -0.07 & 1.14 & \bf{1.00} & 1.10\\
     & 5000 & -0.04 & \bf{0.01} & -0.05 & 0.68 & \bf{0.64} & 0.66\\
    1.5 & 500 & \bf{-0.69} & -1.01 & -0.81 & 5.04 & \bf{4.17} & 5.07\\
     & 1000 & -1.01 & -1.17 & \bf{-0.98} & 7.42 & \bf{5.03} & 6.24\\
     & 2500 & \bf{-1.36} & -1.36 & -1.52 & 14.89 & \bf{6.44} & 11.10\\
     & 5000 & -0.44 & -0.50 & \bf{-0.42} & 6.64 & \bf{3.54} & 5.24\\
    3 & 500 & \bf{0.08} & -0.94 & 0.44 & 6.30 & \bf{4.25} & 4.90\\
     & 1000 & 0.71 & -0.87 & \bf{0.46} & 9.53 & \bf{5.34} & 6.84\\
     & 2500 & 0.29 & \bf{-0.21} & 1.07 & 10.14 & \bf{4.12} & 6.60\\
     & 5000 & 0.79 & \bf{0.10} & 0.72 & 5.02 & \bf{2.52} & 3.79\\
    \hline
    \end{tabular}
    \caption{Results of Simulation Study 2 comparing the proposed time-smoothed SDR algorithm (SDR Curve; \ref{sec:sdr-curve}), the time-smoothed SDR algorithm with the isotonic constraint step removed (Unconstrained SDR Curve); and the benchmark in terms of median error (ME) and median absolute error (MAE). The median error and median absolute error closest to zero for each simulation setting are bolded.}
    \label{tab:simulation-study-2}
\end{table}

\section{Application}
\label{sec:application}
To illustrate our methods we revisit a study of the effects of union membership on wages  \citep{vella1998wages}. The data are sourced from the National Longitudinal Survey (Youth Sample) and comprise a sample of males who completed their education by 1980 and who were working full-time. The sample were followed from 1980 to 1987. The exposure of interest was union membership (defined as whether the participant indicated their salary was determined by a collective bargaining agreement). The analysis dataset used by \citep{vella1998wages} is publicly available as the \texttt{wagepan} dataset of the \texttt{wooldridge} package \citep{shea2024wooldridge}. The analysis dataset ($n = 545$) includes 3 baseline covariates (including race/ethnicity indicators and years of schooling) and 27 time-varying covariates (including 9 indicators of occupation type). The exposure variable is a binary indicator of whether wages were reported as set by a collective bargaining agreeement; for simplicity, we refer to the exposure as ``union'' vs. ``non-union''. The time-varying outcome is the reported hourly wage in US dollars. 

We estimated the effect of eight alternative modified treatment policies. For each year from 1980 to 1987, we define an MTP that aligns the exposure status of all participants in that year to the ``union''. In all other years than the index year, the exposure status is not intervened on (that is, the exposure remains at its natural value). For the index year 1983, for example, the MTP intervention induces the counterfactual ``what would the population mean log-wage had been from 1984-1987, possibly contrary to fact, the entire population had their wages set by a collective bargaining agreement in 1983?'' We applied our proposed longitudinal causal effect curve algorithm to estimate the counterfactual population mean wage under each MTP. Nuisance parameters were estimated using an ensemble of learners \citep{vdl2007superlearner} including generalized linear models and random forests (\texttt{ranger}; \citealt{wright2017ranger}). 

The results of the analysis are shown in Figure~\ref{fig:application-results}. For the years 1980-1983, the union intervention had a statistically significant positive effect on wages. In years after the intervention year, in which union status was not intervened on (taking its natural value), the effect of the earlier intervention attenuated. For example, for the 1980 intervention, the effect on wages remained in 1981, and was statistically insignificant for all later years. Taken together, the results indicate a union premium for union membership in the first six years of the study period (1980-1985), yet found no effect in the latter four years. 

\begin{figure}
    \centering
    \includegraphics[width=1\linewidth]{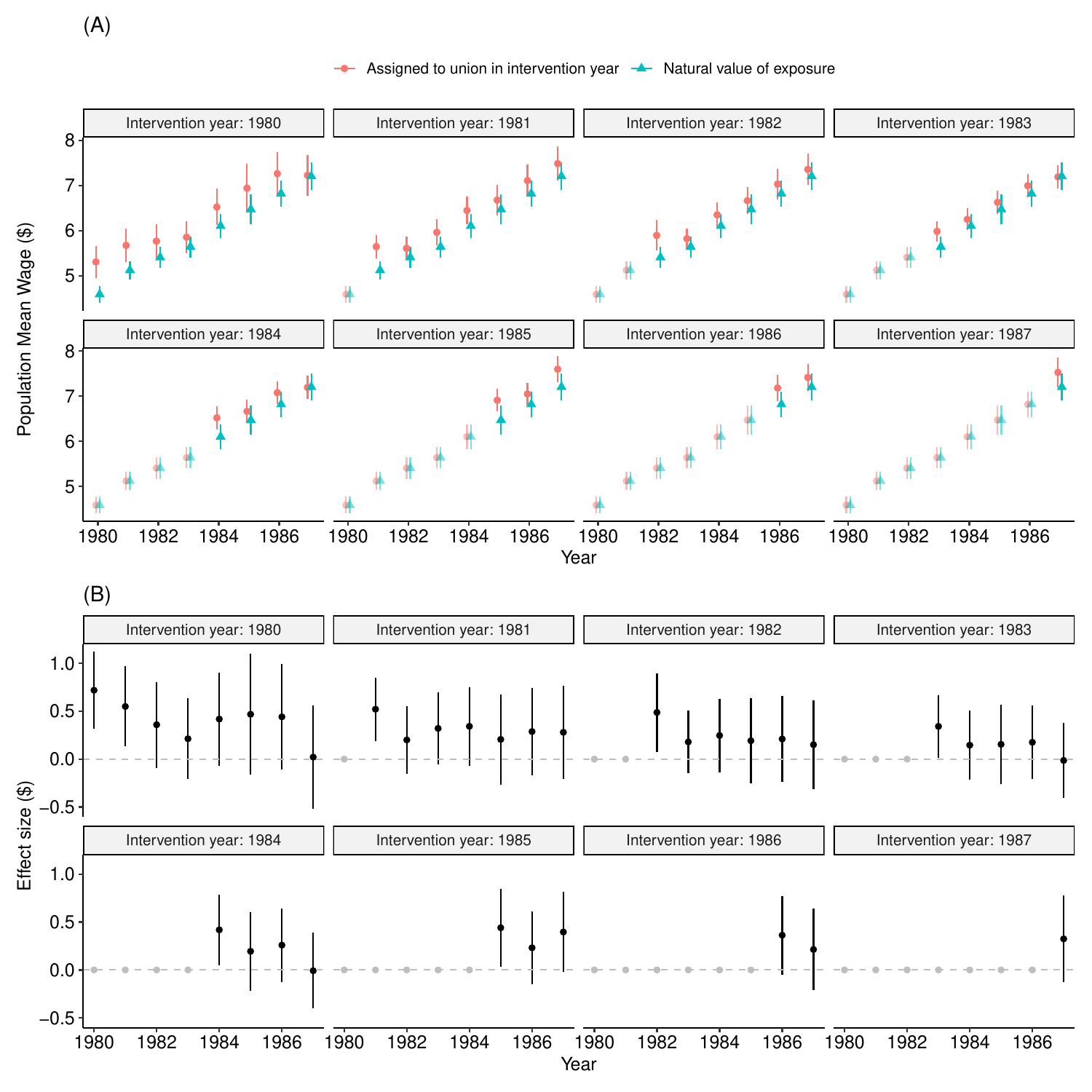}
    \caption{A: estimated population mean wages (\$) and 95\% confidence intervals under hypothetical interventions in which all are assigned to union status in an intervention year (red) vs. union status is left at its natural value (blue). B: point estimates and pointwise 95\% confidence intervals for difference between intervention and non-intervention counterfactual population mean wages. }
    \label{fig:application-results}
\end{figure}

\section{Discussion}
\label{sec:discussion}

Our principal contribution is a approach for estimating the effect of a longitudinal intervention, defined via an LMTP, on a time-varying outcome. Our method exhibits statistical benefits due to its use of time-smoothing in estimating the required sequential regressions. This is particularly relevant when there are high degrees of outcome missingness. Prior methods essentially stratify the required regressions by time; when there is high missingness, this leads to smaller effective sample sizes and higher variance. Our proposed algorithm, on the other hand, pools information across time points, yielding better predictions. This phenomenon is illustrated in Simulation Study 1, in which our proposed approach exhibits smaller finite-sample variance than the benchmark when missingness is high and sample size is low. Asymptotically, we expect the two methods to perform similarly; this is confirmed in the simulations, where the two methods perform similarly at large sample sizes across all missingness levels. 

Our proposed algorithm is also computationally efficient, making practical the estimation of causal effect curves for longitudinal data sets with many measurement times. Compared to a naive application of the sequential regression estimator, which requires estimating $O(\tau^2)$ sequential regressions, our proposal only requires $O(\tau)$ regressions. We expect this to significantly reduce the computational time needed to estimate a longitudinal curve; this is borne out in Simulation Study 1, in which the proposed algorithm has a relative runtime of $0.44-0.65$ compared to the naive benchmark method. 

Another contribution of our work that is relevant to SDR style estimators is our strategy for constraining the estimated sequential regressions to respect the bounds of the outcome variable using isotonic regression. In this style of estimator, carefully designed pseudo-outcomes with double-robust properties are regressed on covariates. Particularly in difficult scenarios with high confounding (low practical covariate overlap), the estimated pseudo-outcomes may be highly variable, and subsequent regressions of the pseudo-outcomes may yield predictions that fall outside the bounds of the parameter space. Our solution is to project the regression onto the space of regressions that respect the parameter space via an isotonic calibration approach, and prove that doing so does not effect the asymptotic properties of the overall estimator. In finite samples, Simulation Study 2 illustrates that including this step can yield substantial finite-sample efficiency gains. We expect this approach to be useful as a way to improve the finite-sample performance of other SDR algorithms.

\section*{Data Availability}
The data used for the application are available in the \texttt{wooldridge} package \url{https://cran.r-project.org/web/packages/wooldridge/index.html}. 

\section*{Acknowledgements}
The computational requirements for this work were supported in part by the NYU Langone High Performance Computing (HPC) Core's resources and personnel.

\clearpage

\section*{Supplemental Material}

\setcounter{section}{0}
\renewcommand\thesection{S\arabic{section}}

\section{Calibration of estimates using isotonic regression}
    Let $X_i=(Y_i,Z_i)\sim\P\in \mathcal P$ for $i\in[n]$ denote an i.i.d. sample from variable taking values in a bounded set $\mathcal X=\mathcal Y\times \mathcal Z$. Let $\mathcal V_1,\ldots, \mathcal V_J$ denote a partition of $\{1,\ldots, n\}$, and define $\mathcal T_j=\{1,\ldots, n\}\backslash \mathcal V_j$. Define $j(i)$ as the function that returns the index of the set $\mathcal V_j$ to which observation $i$ belongs. Let $\theta : {\mathcal P}\mapsto \mathcal H$ denote a parameter of interest, where $\mathcal H$ is the Hilbert space of all functions $f: \mathcal Z \mapsto [0,1]$. Let $\eta$ denote a nuisance parameter. For distributions $\P,\F¯\in\mathcal P$, let $\varphi(\cdot;\eta):\mathcal X \mapsto \Re$ denote a function such that $\E_\P[\varphi(X;\eta_\F)\mid Z=z] = \theta_\P(z) + R(z;\eta_\P,\eta_\F)$, for some error $R(z;\eta_\P,\eta_\F)$.   For each $j$, $\hat\eta_{j}$ denote an estimate of the nuisance parameter, and let $\hat\theta_{j}(z)$ denote a preliminary estimate of $\theta_\P(z)$, both trained using only data in $\mathcal T_j$. Let $\tilde \theta_j(z)=\hat g \circ \hat \theta_j(z)$, where $\hat g$ is defined as 
    \[\hat g = \argmin_{g \in \G}\sum_{i  = 1}^n[\varphi(X_i;\hat \eta_{j(i)}) - g \circ \hat\theta_{j(i)}(Z_i)]^2,\]
    with $\G$ denoting the space of non-decreasing monotone functions $g:\Re \mapsto [0,1]$.
\begin{lemma}[Bounds on the error of isotonic calibration of estimates]\label{theo:iso}
Assume there is a constant $M$ such that $\sup_{x\in\mathcal X}\varphi(x,\hat \eta_j)<M$ for all $j$ and $\sup_{x\in\mathcal X}\varphi(x,\eta)<M$. For all $j\in[J]$, we have
\[||\tilde\theta_j - \theta||\lesssim ||\hat\theta_j - \theta|| + O_{\Pr}(n^{-1/3} + ||R(\cdot;\eta,\hat\eta_j)||).\]
\end{lemma}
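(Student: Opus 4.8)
The plan is to compare the calibrated estimate $\tilde\theta_j=\hat g\circ\hat\theta_j$ to $g_0\circ\hat\theta_j$, where $g_0:u\mapsto\min\{\max\{u,0\},1\}$ is the clipping map, which is $1$-Lipschitz and lies in $\G$. The triangle inequality gives
\[\|\tilde\theta_j-\theta\|\le\|\hat g\circ\hat\theta_j-g_0\circ\hat\theta_j\|+\|g_0\circ\hat\theta_j-\theta\|,\]
and since $\theta$ is $[0,1]$-valued we have $g_0\circ\theta=\theta$, so pointwise $|g_0(\hat\theta_j(z))-\theta(z)|=|g_0(\hat\theta_j(z))-g_0(\theta(z))|\le|\hat\theta_j(z)-\theta(z)|$; hence $\|g_0\circ\hat\theta_j-\theta\|\le\|\hat\theta_j-\theta\|$. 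It remains to bound the calibration discrepancy $\|\hat g\circ\hat\theta_j-g_0\circ\hat\theta_j\|$ by $O_{\Pr}\big(\|\hat\theta_j-\theta\|+\|R(\cdot;\eta,\hat\eta_j)\|+n^{-1/3}\big)$.

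For that, I would use that $\hat g$ is an isotonic least-squares fit, i.e.\ the empirical $L_2$ projection of the pseudo-outcomes onto the convex cone $\{g\circ\hat\theta_j:g\in\G\}$. Working conditionally on the cross-fitted nuisances — so that sample splitting makes $X_i$ independent of $(\hat\eta_{j(i)},\hat\theta_{j(i)})$ — the Hilbert-space projection inequality in $L_2(\Pn)$, together with the orthogonality of $\varphi-\E_\P[\varphi\mid\hat\theta_j]$ to $\sigma(\hat\theta_j)$-measurable functions, yields
\[\|\hat g\circ\hat\theta_j-g_0\circ\hat\theta_j\|_n^2\le(\Pn-\P)\big[(\varphi-g_0\circ\hat\theta_j)^2-(\varphi-\hat g\circ\hat\theta_j)^2\big]+\|m_j\circ\hat\theta_j-g_0\circ\hat\theta_j\|^2,\]
where $\|\cdot\|_n$ is the empirical $L_2$ norm and $m_j(u)=\E_\P[\varphi(X;\hat\eta_j)\mid\hat\theta_j(Z)=u]$. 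The key structural input is then: by the defining property of $\varphi$ with $\eta_\F=\hat\eta_j$ and iterated expectations, $m_j\circ\hat\theta_j=\E_\P[\theta(Z)+R(Z;\eta,\hat\eta_j)\mid\hat\theta_j(Z)]$, and since conditional expectation is an $L_2(\P)$-contraction while $g_0\circ\hat\theta_j$ is $\sigma(\hat\theta_j)$-measurable,
\[\|m_j\circ\hat\theta_j-g_0\circ\hat\theta_j\|\le\|\theta-g_0\circ\hat\theta_j\|+\|R(\cdot;\eta,\hat\eta_j)\|\le\|\hat\theta_j-\theta\|+\|R(\cdot;\eta,\hat\eta_j)\|.\]
This is the ``bias'' contribution and already accounts for the first two terms in the asserted bound.

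The remaining term is the empirical-process fluctuation $(\Pn-\P)$ applied to $(\hat g\circ\hat\theta_j-g_0\circ\hat\theta_j)(2\varphi-g_0\circ\hat\theta_j-\hat g\circ\hat\theta_j)$. I would bound the second factor by a constant multiple of $M$ (using the boundedness assumptions) and observe that, conditionally on $\hat\theta_j$, the class $\{g\circ\hat\theta_j:g\in\G\}$ is contained in the monotone $[0,1]$-valued functions, which have bracketing entropy $\log\nb(\epsilon)\lesssim\epsilon^{-1}$. A localized maximal inequality, peeled over dyadic shells of $r_n:=\|\hat g\circ\hat\theta_j-g_0\circ\hat\theta_j\|_n$, then controls this term by $O_{\Pr}\big(n^{-1/2}r_n^{1/2}+(nr_n)^{-1}\big)$; feeding this into the display of the previous paragraph produces $r_n^2\lesssim(\|\hat\theta_j-\theta\|+\|R(\cdot;\eta,\hat\eta_j)\|)^2+n^{-1/2}r_n^{1/2}+(nr_n)^{-1}$, which solves to $r_n=O_{\Pr}\big(\|\hat\theta_j-\theta\|+\|R(\cdot;\eta,\hat\eta_j)\|+n^{-1/3}\big)$, and one more pass of the same entropy bound for the squared class transfers this from $\|\cdot\|_n$ to $\|\cdot\|$; substituting into the first display gives the lemma. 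I expect the localization step to be the main obstacle: one must balance the modulus of continuity of the empirical process over the monotone class (of order $\delta^{1/2}$ at radius $\delta$) against the quadratic curvature to extract exactly the $n^{-1/3}$ exponent, while handling the dependence of the function class on the random, cross-fitted $\hat\theta_j$ (done by conditioning) and, in the pooled isotonic fit, the fold-varying regression functions $m_{j'}$ — the cleanest route being to establish the bound per data split and then aggregate, as is standard for cross-fitted estimators \citep{pmlr-v202-van-der-laan23a}.
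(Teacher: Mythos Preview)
Your proposal is correct and follows the same high-level structure as the paper's proof: triangle inequality via a reference $g\in\G$, a variational/projection inequality for the isotonic fit, and a localized empirical-process bound over the monotone class (bracketing entropy $\lesssim\epsilon^{-1}$) to extract the $n^{-1/3}$ rate via peeling. The main difference is the choice of reference function: you compare $\hat g$ to the clipping map $g_0$, while the paper compares to the population-optimal isotonic fit $\check g=\argmin_{g\in\G}\sum_j\P[\varphi(\cdot;\eta)-g\circ\hat\theta_j]^2$. The paper's choice permits the use of \emph{two} variational inequalities (one for $\hat g$ at the empirical level, one for $\check g$ at the population level), which cleanly isolates the bias contribution as $\|R(\cdot;\eta,\hat\eta_j)\|$ via Cauchy--Schwarz without the extra $\|\hat\theta_j-\theta\|$ term you pick up from $\|m_j\circ\hat\theta_j-g_0\circ\hat\theta_j\|$; your route still works because that extra piece is harmlessly absorbed into the final bound. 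One point to watch: since $\hat g$ is fitted on the \emph{pooled} sample across all folds, it depends on every $\hat\theta_j$ and $\hat\eta_j$ simultaneously, so you cannot literally ``establish the bound per data split and then aggregate.'' The paper handles this by running the peeling argument jointly over the vector $(h_1,\ldots,h_J)$ with $h_j=\|\check\theta_j-\tilde\theta_j\|$, summing the per-fold terms inside a single event and controlling the resulting probability via Markov's inequality together with a combinatorial lemma on sums of the form $\sum_{M_K}\sum_j 2^{f(k_j)}/\sum_j 2^{g(k_j)}$. Your per-fold intuition is right, but the bookkeeping must reflect that $\hat g$ is a single shared fit.
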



\begin{proof}
This proofs follows roughly the arguments of the proof of Theorem 4.8 in \cite{pmlr-v202-van-der-laan23a}. Let $\P_j$ denote the empirical distribution of $\{X_i:i\in \mathcal V_j\}$. For any function $f$, we define the  norm $||f||$ as $||f||^2 = \int [f(x,y)]^2\dd\P(x,y)$.  Let \[R(g) = \sum_{j\in[J]}\P[\varphi(\cdot;\eta) - g\circ \hat\theta_j(\cdot)]^2,\] $\check g = \argmin_{g\in \G}R(g)$, and $\check \theta_j = \check g \circ \hat \theta_j$. Let $\hat\varphi_j(\cdot)=\varphi(\cdot;\hat \eta_{j})$ and $\varphi(\cdot)=\varphi(\cdot;\eta)$. For any $\epsilon\in[0,1]$ and $g\in\G$, we have $\check g + \epsilon(g - \check g)\in \G$, and therefore by definition of $\check g$ we have
\begin{equation}\label{eq:optim1}
\lim_{\epsilon\downarrow 0}\frac{R(\check g + \epsilon(g - \check g)) - R(\check g)}{\epsilon}=-2\sum_{j\in [J]} \P[(g - \check g )\circ \hat\theta_j][\varphi - \check g \circ \hat\theta_j]\geq 0. 
\end{equation}
First, note that 
\begin{equation}\label{eq:ineq1}    
\begin{aligned}
    &\sum_{j\in[J]}||(\check g - \hat g)\circ \hat\theta_j||^2 =  \sum_{j\in[J]}\P[(\check g - \hat g)\circ \hat\theta_j][(\check g - \hat g)\circ \hat\theta_j]\\
    &=\sum_{j\in[J]}\P[(\check g - \hat g)\circ \hat\theta_j][\check g\circ \hat\theta_j - \theta] - \sum_{j\in[J]}\P[(\check g - \hat g)\circ \hat\theta_j][\hat g \circ \hat\theta_j - \theta]\\
        &=\sum_{j\in[J]}\P[(\check g - \hat g)\circ \hat\theta_j][\check g\circ \hat\theta_j - \varphi] - \sum_{j\in[J]}\P[(\check g - \hat g)\circ \hat\theta_j][\hat g \circ \hat\theta_j - \varphi]\\
        &\leq - \sum_{j\in[J]}\P[(\check g - \hat g)\circ \hat\theta_j][\hat g \circ \hat\theta_j - \varphi]\\
        &=-\sum_{j\in[J]}\P_j[(\check g - \hat g)\circ \hat\theta_j][\hat g \circ \hat\theta_j - \varphi] + \sum_{j\in[J]}(\P_j- \P)[(\check g - \hat g)\circ \hat\theta_j][\hat g \circ \hat\theta_j - \varphi]
\end{aligned}
\end{equation}
where the third equality follows from $\E[\varphi(X;\eta)\mid Z=z] = \theta(z)$ and the inequality from (\ref{eq:optim1}) with $g=\hat g$. Furthermore, an argument similar to that leading to (\ref{eq:optim1}) yields, for any $g$:
\[-\sum_{j\in[J]}\P_j[(g - \hat g)\circ \hat\theta_j][\hat \varphi_j - \hat g \circ \hat\theta_j]\geq 0.\]
Applying the above with $g=\check g$ leads to 
\[-\sum_{j\in[J]}\P_j[(\check g - \hat g)\circ \hat\theta_j][\hat \varphi_j - \varphi]\geq -\sum_{j\in[J]}\P_j[(\check g - \hat g)\circ \hat\theta_j][\hat g \circ \hat\theta_j - \varphi], \]
which, together with  to (\ref{eq:ineq1}) yields
{\small\[\sum_{j\in[J]}||(\check g - \hat g)\circ \hat\theta_j||^2\leq -\sum_{j\in[J]}\P_j[(\check g - \hat g)\circ \hat\theta_j][\hat \varphi_j - \varphi]+ \sum_{j\in[J]}(\P_j- \P)[(\check g - \hat g)\circ \hat\theta_j][\hat g \circ \hat\theta_j - \varphi],\]}
which is equivalent to 
{\small
\begin{equation}
\begin{aligned}
\sum_{j\in[J]}||\check \theta_j - \tilde\theta_j||^2\leq \sum_{j\in[J]}\P(\check \theta_j - \tilde \theta_j)(\varphi-\hat\varphi_j) +\sum_{j\in[J]}(\P_j-\P)(\check \theta_j - \tilde\theta_j)(\tilde\theta_j - \varphi).
\end{aligned}
\end{equation}}%
In what follows we provide bounds on the terms inside the summation in the right hand side of the above expression. First, by iterated expectation and definitions we have
\begin{equation}
\begin{aligned}\sum_{j\in[J]}\P(\check \theta_j - \tilde \theta_j)(\varphi-\hat \varphi_j) &= \sum_{j\in[J]}\E_{\P}[(\check \theta_j(Z) - \tilde \theta_j(Z))\E_{\P}[\varphi(X;\eta)-\varphi(X;\hat\eta_j) \mid Z]]\\
&\leq \sum_{j\in[J]}||\check\theta_j - \tilde\theta_j||\times ||R(\cdot;\eta,\hat\eta_j)||,
\end{aligned}
\end{equation}
where the inequality follows from Cauchy-Schwarz and the fact that, by definition of $\varphi$, we have $\E[\varphi(X;\eta)\mid Z=z]=\theta(z)$ and $\E[\varphi(X;\hat\eta_j)\mid Z=z]=\theta(z) + R(z;\eta,\hat\eta_j)$, where we remind the reader that $\E[f(X)]$ is defined as an expectation with respect to $X$, keeping the function $f$ fixed.

For the second term, let $h_j = ||\check \theta_j - \tilde\theta_j||$. Let $\mathcal F_j(h) = \{[(g_1 - g_2)\circ\hat\theta_j] (g_2\circ \hat\theta_j -\varphi): g_1\in\mathcal G, g_2\in\mathcal G, ||(g_1 - g_2)\circ\hat \theta_j||\leq h\}$. Define  $S_j(h) = \sup_{f\in \mathcal F_j(h)}(\P_j-\P)f$. Then we have
\[(\P_j-\P)(\check \theta_j - \tilde\theta_j)(\tilde\theta_j -\varphi)\leq S_j(h_j).\]
We now have all the elements to proceed with the main argument, which follows a ``peeling'' argument (see the proof of Theorem 3.2.5 in \cite{vanderVaart&Wellner96}). Define $r_j =  ||R(\cdot;\eta,\hat\eta_j)||$.  Notice that the event
\[E=\left\{\sum_{j\in[J]}h_j^2\leq \sum_{j\in[J]}\{h_j r_j + S_j(h_j)\}\right\}\]
occurs with probability one. For a quantity $\varepsilon_j\geq 0 $ that is independent of $\mathcal T_j$, define the ``shells'' $N_{j,k}=\{2^k \varepsilon_j< h_j  \leq 2^{k+1}\varepsilon_j\}$ for $k=1,\ldots,\infty$. Let $M_K=\{k_1,\ldots, k_J: k_j\in \mathbb N, \max_{j\in[J]} 2^{k_j} \geq 2^K\}$. For any $K\geq 1$ we have
\begin{equation}\label{eq:jointp}
\begin{aligned}
       \Pr\left[\max_{j\in[J]}\{h_j / \varepsilon_j\}> 2^K \right] &=\sum_{M_K}\Pr\left[ N_{1,k_1},\ldots, N_{J,k_J}, E\right].    
\end{aligned}
\end{equation}
Notice that, in $N_{j,k}$, we have $h_j \leq 2^{k+1}\varepsilon_j$, and $S_j(h_j) \leq S_j(2^{k+1}\varepsilon_j)$. 
Therefore
{\small\[
\begin{aligned}
    \Pr[N_{1,k_1},\ldots, N_{J,k_K}, E]&\leq \Pr\left\{2^{k_1}\varepsilon_1 < h_1, \ldots, 2^{k_J}\varepsilon_J < h_J, \sum_{j\in[J]}h_j^2\leq \sum_{j\in[J]}\left[2^{k_j+1}\varepsilon_j r_j + S_j(2^{k_j+1}\varepsilon_j)\right]\right\}\\
    &\leq \Pr\left\{\sum_{j\in[J]}2^{2k_j}\varepsilon_j^2  < \sum_{j\in[J]}h_j^2\leq \sum_{j\in[J]}\left[2^{k_j+1}\varepsilon_j r_j + S_j(2^{k_j+1}\varepsilon_j)\right]\right\}\\
    &\leq \Pr\left\{\sum_{j\in[J]}2^{2k_j}\varepsilon_j^2  < \sum_{j\in[J]}\left[2^{k_j+1}\varepsilon_j r_j + S_j(2^{k_j+1}\varepsilon_j)\right]\right\}
\end{aligned}
\]}%
Using Markov's inequality we get this probability is  bounded above by
\begin{equation}\label{eq:ratio}
\begin{aligned}
    \sum_{j\in[J]}\frac{1}{\sum_{j\in[J]}2^{2k_j}\varepsilon_j ^2}\mathbb E\left\{2^{k_j+1}\varepsilon_j  r_j + S_j(2^{k_j+1}\varepsilon_j )\right\}.
\end{aligned}    
\end{equation}
Using Lemma~\ref{lemma:bound} below and the tower rule, we get
\[
\begin{aligned}
    \mathbb E\left\{2^{k_j+1}\varepsilon_j  r_j + S_j(2^{k_j+1}\varepsilon_j )\right\} =  \mathbb E\left\{\mathbb E\left(2^{k_j+1}\varepsilon_j  r_j + S_j(2^{k_j+1}\varepsilon_j )\,\big|\, \mathcal T_j\right)\right\}\\
   \lesssim \mathbb E\left\{ 2^{k_j+1}\varepsilon_j  r_j + \sqrt{\frac{2^{k_j+1}\varepsilon_j }{n_j}}\left(1+\frac{1}{2^{2(k_j+1)}\varepsilon_j ^2 }\sqrt{\frac{2^{k_j+1}\varepsilon_j }{n_j}}\right)\right\}.
\end{aligned}
\]
Choose $\varepsilon_j = \max\{n_j^{-1/3}, r_j\}$. For this choice, we have $r_j \leq \varepsilon_j$ and thus
\[\frac{\sum_{j\in[J]} 2^{k_j+1}\varepsilon_jr_j}{\sum_{j\in[J]}2^{2k_j}\varepsilon_j^2}\leq \frac{\sum_{j\in[J]} 2^{k_j+1}}{\sum_{j\in[J]}2^{2k_j}}=:s_1(k_1,\ldots,k_J).\]
This choice also ensures that $\varepsilon_j \geq n_j^{-1/3}$ and therefore $\sqrt{\varepsilon_j/n_j}\leq \varepsilon_j^2$, leading to
\begin{equation*}
    \frac{1}{\sum_{j\in[J]}2^{2k_j}\varepsilon ^2}\sum_{j\in [J]}\sqrt{\frac{2^{k_j+1}\varepsilon_j}{n_j}} \leq \frac{\sum_{j\in [J]}2^{(k_j+1)/2}}{\sum_{j\in[J]}2^{2k_j}}=:s_2(k_1,\ldots k_J),
\end{equation*}
as well as $n_j\varepsilon_j^3\geq 1$, leading to
\begin{equation*}
    \frac{1}{\sum_{j\in[J]}2^{2k_j}\varepsilon ^2}\sum_{j\in[J]}\frac{1}{2^{2(k_j+1)}\varepsilon_j^2 }\frac{2^{k_j+1}\varepsilon_j}{n_j} \leq \frac{\sum_{j\in[J]}2^{-k_j-1}}{\sum_{j\in[J]}2^{2k_j}}=:s_3(k_1,\ldots k_J).
\end{equation*}
Putting all of this together with (\ref{eq:ratio}) shows that 
\[\Pr\left[\max_{j\in[J]}\{h_j/\varepsilon_j\} > 2^K \right] \lesssim \sum_{M_K}\{s_1(k_1,\ldots k_J) +s_2(k_1,\ldots k_J)+ s_3(k_1,\ldots k_J)\}\xrightarrow{K\to \infty} 0,\]
where the convergence follows from Lemma \ref{lemma:conv} below. 
This implies that, for any $\epsilon>0$, we can find $K$ large enough such that $\Pr\left[\max_{j\in[J]} \{h_j / \varepsilon_j\}> 2^K\right]<\epsilon$, i.e., we have proved that $h_j = O_\Pr(\varepsilon_j)$ for all $j$. Recall the definitions $h_j = ||\check \theta_j - \tilde\theta_j||$, $\varepsilon_j = \max\{n_j^{-1/3}, r_j\}$, and  $r_j =  ||R(\cdot;\eta,\hat\eta_j)||$. Using the triangle inequality we get 
$||\tilde \theta_j - \theta||\leq ||\check \theta_j - \theta|| + ||\tilde \theta_j - \check \theta_j||$. Notice that, by definition 
\[\check\theta_j = \argmin_{g\circ\hat\theta_j:g\in\mathcal G}||g\circ\hat\theta_j-\theta||.\] 
Since the identity function is in $\mathcal G$, this means $||\check \theta_j - \theta|| \leq ||\hat \theta_j - \theta|| $.
Putting all the above together with $\varepsilon_j\leq n_j^{-1/3} + ||R(\cdot;\eta,\hat\eta_j)||$ yields the result of the theorem.
\end{proof}

\begin{lemma}\label{lemma:bound}
    Let $\mathcal V$ and $\mathcal T$ denote a partition of an i.i.d. sample  from $X\sim \P$ of size $m$ and $n-m$, respectively. Let $\P_m$ denote the empirical distribution of $\mathcal V$.  For functions $\theta:\mathcal Z\mapsto\Re$ and $\varphi:\mathcal X\mapsto\Re$ not dependent on $\mathcal V$, let $\mathcal F(h) = \{[(g_1 - g_2)\circ\theta] (g_2\circ \theta - \varphi): g_1\in\mathcal G, g_2\in\mathcal G, ||(g_1 - g_2)\circ\theta||\leq h\}$. Define $\mathbb E[f(\mathcal V)\mid \mathcal T]$ as the expectation over draws of $\mathcal V$ conditional on $\mathcal T$, and define $S(h) = \sup_{f\in \mathcal F(h)}(\P_m-\P)f$. We have
     \[\mathbb E[S(h)\mid \mathcal T]\lesssim 
     \sqrt{\frac{h}{m}}\left(1+\frac{1}{h^2}\sqrt{\frac{h}{m}}\right).\]
\end{lemma}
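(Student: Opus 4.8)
The plan is to recognize $S(h)$, conditionally on the training sample $\mathcal T$, as the supremum of an empirical process over a uniformly bounded, localized function class of monotone type, and then apply a localized maximal inequality. After conditioning on $\mathcal T$ the functions $\theta$ and $\varphi$ are deterministic, so $S(h)=\sup_{f\in\mathcal F(h)}(\P_m-\P)f$ is an empirical process in the fresh sample $\mathcal V$ of size $m$. Two structural facts are immediate. First, \emph{uniform boundedness}: for $f=[(g_1-g_2)\circ\theta]\,(g_2\circ\theta-\varphi)\in\mathcal F(h)$ we have $|(g_1-g_2)\circ\theta|\le 1$ pointwise since $g_1,g_2$ are $[0,1]$-valued, and $|g_2\circ\theta-\varphi|\le 1+M$ by the boundedness assumption on $\varphi$, so $|f|\le 1+M=:C$ and $F\equiv C$ is a constant envelope. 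Second, \emph{localization}: $\|f\|\le C\,\|(g_1-g_2)\circ\theta\|\le C h$, so $\mathcal F(h)$ has $L_2(\P)$-diameter $\lesssim h$.

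Next I would bound the bracketing entropy of $\mathcal F(h)$. The monotone class $\G$ satisfies $\log\nb(\varepsilon,\G,L_2(\Q))\lesssim 1/\varepsilon$ for every probability measure $\Q$ (the classical bound for monotone functions; see e.g.\ van der Vaart \& Wellner, Theorem~2.7.5). Applying this with $\Q=\P\circ\theta^{-1}$ and pulling each bracket back through the fixed map $\theta$ gives the same bound for $\{g\circ\theta:g\in\G\}$ in $L_2(\P)$; the standard preservation results for differences, for translation by the fixed $\varphi$, and for products of uniformly bounded classes then yield $\log\nb(\varepsilon,\mathcal F(h),L_2(\P))\lesssim 1/\varepsilon$. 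Hence the bracketing entropy integral satisfies $\jb(\delta,\mathcal F(h),L_2(\P))=\int_0^\delta\sqrt{1+\log\nb(\varepsilon,\mathcal F(h),L_2(\P))}\,d\varepsilon\lesssim\int_0^\delta\varepsilon^{-1/2}\,d\varepsilon\asymp\sqrt\delta$; crucially the $h$-dependence enters only through the diameter $\delta$, not through the entropy rate.

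Then I would invoke the localized maximal inequality in terms of bracketing entropy (van der Vaart \& Wellner, Lemma~3.4.2): for a class with $L_2(\P)$-diameter $\delta$ and envelope $F$, $\mathbb E\big\|\sqrt m\,(\P_m-\P)\big\|_{\mathcal F}\lesssim \jb(\delta)\big(1+\jb(\delta)\,\|F\|_{\P,2}/(\delta^2\sqrt m)\big)$. Taking $\delta\asymp h$, using $\jb(\delta)\asymp\sqrt h$ and $\|F\|_{\P,2}=C=O(1)$, and dividing by $\sqrt m$ (noting $S(h)\le\|\P_m-\P\|_{\mathcal F(h)}$), this gives $\mathbb E[S(h)\mid\mathcal T]\lesssim \tfrac{\sqrt h}{\sqrt m}\big(1+\tfrac{\sqrt h}{h^2\sqrt m}\big)=\sqrt{h/m}\,\big(1+\tfrac1{h^2}\sqrt{h/m}\big)$, which is the assertion.

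The step I expect to be the main obstacle is the entropy bookkeeping: confirming that the $1/\varepsilon$ bracketing bound for monotone functions genuinely survives composition with the arbitrary (non-monotone, possibly irregular) $\theta$, and that the subsequent difference, translation, and product operations inflate the entropy only by multiplicative constants, so that $\mathcal F(h)$ really has monotone-type entropy uniformly in $h$. A secondary point of care is matching the normalization of the cited maximal inequality ($\P_m-\P$ versus $\sqrt m\,(\P_m-\P)$, envelope norm versus diameter) so that the three exponents in the final bound line up exactly; I would also remark that the stated inequality is only informative for $h$ bounded, which is automatic here since all functions involved are uniformly bounded.
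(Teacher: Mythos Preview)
Your proposal is correct and follows essentially the same route as the paper: bound the entropy of the monotone class $\G$ by Theorem~2.7.5 of van der Vaart and Wellner, propagate it to $\mathcal F(h)$ via standard preservation results, and then apply a localized maximal inequality with $\delta\asymp h$ to obtain the stated bound. The only technical difference is that you work with bracketing entropy and invoke Lemma~3.4.2 of van der Vaart and Wellner (1996), whereas the paper passes from bracketing to covering numbers, uses Theorem~2.10.20 for the preservation step, and applies the uniform-entropy local maximal inequality of van der Vaart and Wellner (2011); both routes yield the same bound.
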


\begin{proof}
    This proof uses empirical process theory and definitions presented in \cite{vanderVaart&Wellner96}. For a class of functions $\mathcal F$, we let $N(\epsilon, \mathcal F, L_2(\P))$ denote the covering number defined as the minimum number of balls of radius $\epsilon$ needed to cover $\mathcal F$, where the norm $L_2(\P)$ is used to define the radius. We also define the uniform entropy integral as 
    \[J(\delta, \mathcal F, L_2(\P))=\int_0^\delta\sqrt{1 + \log N(\epsilon ||F||, \mathcal F, L_2(\P))}d\epsilon,\]
    where $F$ is an envelope of the class $\mathcal F$. 
    Note that $ N(\epsilon, \mathcal F, L_2(\P))\leq \nb(2\epsilon, \mathcal F, L_2(\P))$, where $\nb$ denotes the bracketing number. Theorem 2.7.5 of \cite{vanderVaart&Wellner96}, together with the previous fact shows that $N(\epsilon, \mathcal G, L_2(\P))\lesssim 1/\epsilon$, with $\G$ denoting the space of non-decreasing monotone functions $g:\Re \mapsto [0,1]$. As a consequence, using $\sqrt{a+b}\leq \sqrt{a} + \sqrt{b}$ we get $J(\delta, \mathcal G, L_2(\P))\lesssim \delta^{1/2}$. 
    
    Let $M$ be such that $|\varphi(x)|\leq M$. Define the class $\mathcal F'(h) = \{[(g_1 - g_2)\circ\theta] (g_2\circ \theta - \varphi) /(1+M) : g_1\in\mathcal G, g_2\in\mathcal G, ||(g_1 - g_2)\circ\theta||\leq h\}$. As a result of Theorem 2.10.20 of \cite{vanderVaart&Wellner96}, we have $J(\delta, \mathcal F'(h), L_2(\P))\lesssim \delta^{1/2}$. Note that an envelope for $\mathcal F'(h)$ is $F=1$. Thus, an application of Theorem 2.1 in \cite{van2011local} with $\delta=h$ yields the result of the lemma.
\end{proof}

\begin{lemma}\label{lemma:conv}
    For $K\geq 1$, let $M_K=\{k_1,\ldots, k_J: k_j\in \mathbb N, \max_{j\in[J]} 2^{k_j} \geq 2^K\}$, and let $f$ and $g$ denote functions such that $g(k) > f(k)$ and $g'(k) > f'(k)$. Define
    \[h(M_K)=\sum_{M_K}\frac{\sum_{j\in[J]}2^{f(k_j)}}{\sum_{j\in[J]}2^{g(k_j)}}.\]
    We have $\lim_{K\to\infty}h(M_K)=0.$
\end{lemma}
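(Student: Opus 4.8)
The plan is to collapse the multi-index sum defining $h(M_K)$ into the tail of an ordinary one-dimensional series indexed by $k^\ast:=\max_{j\in[J]}k_j$, and then use that tails of convergent series vanish. The first ingredient is a pointwise bound on the summand. In every invocation of this lemma (the quantities $s_1,s_2,s_3$ in the proof of Lemma~\ref{theo:iso}) the pair $(f,g)$ is affine with $g$ strictly increasing and the derivative gap $g'-f'$ a strictly positive constant, so I would prove the lemma under the mild strengthening $\inf_k\{g'(k)-f'(k)\}>0$, i.e.\ that $g-f$ grows at least linearly. Under this hypothesis, monotonicity of $g$ gives $\sum_{j}2^{g(k_j)}\ge 2^{g(k^\ast)}$, while $\sum_j 2^{f(k_j)}\le J\max_j 2^{f(k_j)}$ is at most $J\,2^{f(k^\ast)}$ when $f$ is increasing and at most a $J$-dependent constant when $f$ is decreasing; in either case
\[
\frac{\sum_{j}2^{f(k_j)}}{\sum_{j}2^{g(k_j)}}\;\le\; C_J\,2^{-c\,k^\ast}
\]
for a constant $C_J$ depending only on $J$ and a constant $c>0$ governed by the asymptotic slope of $g-f$ (for instance $c=1,\tfrac32,2$ for $s_1,s_2,s_3$).

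Next I would count lattice points by their maximum coordinate: the number of tuples $(k_1,\dots,k_J)$ of nonnegative integers with $\max_j k_j=m$ is at most $J(m+1)^{J-1}$ (it equals $(m+1)^J-m^J$). Since every tuple in $M_K$ has $k^\ast\ge K$, combining this with the pointwise bound yields
\[
h(M_K)\;\le\; C_J\,J\sum_{m\ge K}(m+1)^{J-1}\,2^{-cm},
\]
which is the tail of the convergent series $\sum_{m\ge 0}(m+1)^{J-1}2^{-cm}$ ($J$ is fixed, and exponential decay dominates the polynomial factor). Hence $h(M_K)\to 0$ as $K\to\infty$, which is the assertion.

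The one genuinely delicate point is the pointwise estimate, and it is the reason a brute-force bound such as $\sum_j 2^{f(k_j)}/\sum_j 2^{g(k_j)}\le\sum_j 2^{f(k_j)-g(k_j)}$ is useless here: that bound decays in only one coordinate at a time, so summing over the remaining $J-1$ free coordinates reintroduces a divergence. The remedy is precisely to bound by $2^{-ck^\ast}$, which decays simultaneously along every coordinate that is large; but this step needs the gap $g-f$ to be unbounded. A literal reading of ``$g>f$ and $g'>f'$'' permits $g-f$ to be positive yet bounded (e.g.\ $f\equiv 0$, $g(k)=2-1/k$), in which case every summand stays bounded away from $0$ and $h(M_K)=\infty$; so the strengthening above is what makes the statement true. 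Since $f,g$ are affine with $g$ of strictly larger slope in all applications, the strengthening is automatic there and the argument goes through.
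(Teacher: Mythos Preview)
Your argument is correct and takes a genuinely different route from the paper. The paper uses symmetry to bound $h(M_K)$ by (a constant times) the sum with $k_1\ge K$ and the remaining indices free, shifts $k_1\mapsto k_1+K$, shows via L'H\^opital that each resulting term tends to $0$ as $K\to\infty$, and then invokes dominated convergence to pass the limit through the sum. You instead bound each summand directly by $C_J\,2^{-ck^\ast}$ and reduce to the tail of a one-dimensional convergent series by counting lattice points with a prescribed maximum coordinate. Your route is more elementary (no L'H\^opital, no DCT) and also more complete: the only dominating function the paper exhibits is the constant $1$, which is not summable over $\mathbb N^J$, so the DCT step is not justified as written, whereas your lattice-point count produces an explicit summable majorant $(m+1)^{J-1}2^{-cm}$. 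You are also right that the bare hypotheses $g>f$, $g'>f'$ are insufficient (your counterexample $f\equiv 0$, $g(k)=2-1/k$ is valid) and that the affine structure of $f,g$ in the three applications $s_1,s_2,s_3$ supplies the needed linear growth of $g-f$; the paper's L'H\^opital step tacitly relies on the same strengthening. One small refinement: the inequality $\sum_j 2^{g(k_j)}\ge 2^{g(k^\ast)}$ needs no monotonicity of $g$, since the $k^\ast$ term is already one of the summands; what you actually need to extract decay in the decreasing-$f$ case is $g(k)\to\infty$, i.e.\ $g$ eventually increasing --- automatic here since $g(k)=2k$ in all three instances.
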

\begin{proof}
Notice that     
\begin{equation}\label{eq:sum1}
    h(M_K)\lesssim \sum_{(k_2,\ldots,k_J):k_j\geq1}\sum_{k_1\geq K}\frac{2^{f(k_1)} + \cdots + 2^{f(k_J)}}{2^{g(k_1)} + \cdots + 2^{g(k_J)}} = \sum_{(k_1,\ldots,k_J):k_j\geq1}\frac{2^{f(k_1+K)} + \cdots + 2^{f(k_J)}}{2^{g(k_1+K)} + \cdots + 2^{g(k_J)}}. 
\end{equation}
Furthermore, applying L'H\^{o}pital's rule notice that 
\[1\geq\frac{2^{f(k_1+K)} + \cdots + 2^{f(k_J)}}{2^{g(k_1+K)} + \cdots + 2^{g(k_J)}}\xrightarrow[]{K\to\infty} 0.\]
Thus, the dominated convergence theorem can be used to 
take the limit in (\ref{eq:sum1}), interchanging the sum and limit signs, and yielding the result of the lemma.
\end{proof}

\section{Proof of Theorem 1}
We provide a sketch of the proof of asymptotic normality as it follows a similar structure as previous results, such as the proof of \citealt[Theorem 3]{diaz2023nonparametric}. The principle difference is in the application of the isotonic calibration step, which requires applying our Lemma 1.

To see how our Lemma 1 applies, note first that by \citealt[Lemma 1]{diaz2023nonparametric}, the parameter $m_{t+1, s}$ satisfies the following expansion for any $\eta'$:
\begin{align}
    m_{t+1,s}(a_s, h_s) = \E\left[ \varphi_{t+1,s+1}(Z, \eta') \mid A_s = a_s, H_s = h_s \right] + \Rem_s(a_s, h_s; \eta'),
\end{align}
where the second-order remainder term $\Rem_s$ is as defined in the citation.
Therefore, by Lemma 1, we have that
\begin{align}
    \| \tilde{m}_{t+1, s, j} - m_{t+1,s} \| \lesssim \| \hat{m}_{t + 1, s, j} - m_{t+1,s} \| + O_{\mathbb{P}}\left( n^{-1/3} + \|\Rem_s(\cdot; \eta, \hat{\eta}_j)\| \right)
\end{align}
Next, we apply the usual decomposition of the estimator into a central limit term, empirical process term, and bias term.
Recall that for $l = 1, \dots, \tau$, the pooled SDR estimator is given by
\begin{align}
    \hat{\theta}_{\mathrm{sdr}}(l+1) &= \frac{1}{n} \sum_{i=1}^n \varphi_{l+1,1}\left( Z_i, \tilde{\eta}_{j(i)} \right) \\
    &= \frac{1}{J} \sum_{j=1}^J \P_{n,j} \varphi_{l+1,1}(\tilde{\eta}_j).
\end{align}
Next, write
\begin{align}
    \sqrt{n} \left( \hat{\theta}_{\mathrm{sdr}}(l+1) - \theta(l+1) \right) = \mathsf{G}_n(\varphi_{l+1,1}(\eta) - \theta) + R_{n,1} + R_{n,2},
\end{align}
where $R_{n,1}$ is called the empirical process term andf $R_{n,2}$ the bias term, given by
\begin{align}
    R_{n,1} &= \frac{1}{\sqrt{J}} \sum_{j=1}^J \mathsf{G}_{n,j}\left( \varphi_{l+1,1}(\tilde{\eta}_j) - \varphi_{l+1,1}(\eta) \right), \\
    R_{n,2} &= \frac{\sqrt{n}}{J} \sum_{j=1}^J \mathsf{P}(\varphi_{l+1, 1}(\tilde{\eta}_j) - \theta(l + 1)).
\end{align}
To handle the bias term $R_{n,2}$, note that by \citealt[Lemma 1]{diaz2023nonparametric}, for any $\eta'$, $ \P[\varphi_{l+1,1}(\eta')- \theta(l +1)] = \Rem_0(\eta')$, and 
by \citealt[Lemma 3]{diaz2023nonparametric} (and under the assumptions of our theorem), then
\begin{align}
    \Rem_0(\tilde{\eta}) = \sum_{t=1}^{l+1} O_{P} \left( \| \hat{r}_t - r_t \| \| \tilde{m}_{l+1,t} - \tilde{m}_{l+1,t}^\dagger \| \right),
\end{align}
where $\tilde{m}_{l+1,t}^\dagger(a_t, h_t) = \E\left[ \varphi_{l+1, t+1}(\underline{\tilde{\eta}_t} \mid A_t = a_t, H_t = h_t \right]$ (that is, $\tilde{m}_t^\dagger$ is the expectation over the distribution $\P$ with nuisances fixed to $\tilde{\eta}$). Plugging in the results from our Lemma 1 implies that the remainder term becomes (suppressing the cross-fitting notation)
\begin{align}
     \Rem_0(\tilde{\eta}) = \sum_{t=1}^{l+1} O_{P} \left( \| \hat{r}_t - r_t \| \left( \| \hat{m}_{l + 1, t} - m_{l+1,t} \| + O_{\mathbb{P}}\left( n^{-1/3} + \|\Rem_t(\cdot; \eta, \hat{\eta})\| \right) \right)\right),
\end{align}
Therefore the bias term converges as $o_P(1)$ under the conditions of the theorem; in other words, the isotonic correction step introduces a term that converges faster than $n^{-1/2}$, therefore not affecting the convergence of the bias term, and we are able to establish asymptotic normality by placing conditions on the original nuisance estimators $\hat{m}$.  
For the empirical process term, the cross-fitting of the nuisance estimators, combined with the results of our Lemma 1, imply that the $R_{n,1}$ is also $o_P(1)$. This leaves only the central limit term, which establishes asymptotic normality. The convergence of the vector of estimators to a joint normal distribution follows from e.g. \citealt[Theorem 18.10]{van2002part}. The local asymptotic minimax efficiency bound statement is derived from \citealt{van2002part}. 


\section{Additional Simulation Results}

\subsection{Simulation Study 1}

Additional results for Simulation Study 1 for the empirical coverage of the pointwise 95\% confidence intervals and uniform 95\% uniform confidence bands are shown in Table~\ref{tab:simulation-study-coverage-1}.

\begin{table}[h]
    \centering
    \begin{tabular}{|lrrrrr|}
    \hline
    & & \multicolumn{2}{c}{Pointwise 95\% Coverage} & \multicolumn{2}{c|}{Uniform 95\% Coverage} \\
    $\alpha$ & $n$ & SDR Curve & Benchmark & SDR Curve & Benchmark \\
    \hline
    0 & 500 & 94.25\% & 94.20\% & 96.85\% & 96.60\%\\
     & 1000 & 95.00\% & 94.45\% & 97.25\% & 97.10\%\\
     & 2500 & 94.20\% & 94.15\% & 96.80\% & 97.00\%\\
     & 5000 & 94.10\% & 93.65\% & 96.55\% & 96.50\%\\
    0.5 & 500 & 92.95\% & 93.70\% & 95.50\% & 96.05\%\\
     & 1000 & 94.00\% & 94.30\% & 96.10\% & 96.25\%\\
     & 2500 & 94.55\% & 95.45\% & 97.20\% & 97.60\%\\
     & 5000 & 94.95\% & 95.30\% & 97.50\% & 97.25\%\\
    0.8 & 500 & 90.50\% & 91.70\% & 93.30\% & 93.90\%\\
     & 1000 & 93.55\% & 94.15\% & 95.75\% & 96.50\%\\
     & 2500 & 92.80\% & 93.30\% & 96.15\% & 96.05\%\\
     & 5000 & 94.10\% & 93.70\% & 96.55\% & 96.75\%\\
    \hline
    \end{tabular}
    \caption{Results of Simulation Study~1 comparing the proposed time-smoothed SDR algorithm (SDR curve) against the benchmark in terms of the 95\% empirical coverage of the pointwise and uniform confidence regions. }
    \label{tab:simulation-study-coverage-1}
\end{table}

\clearpage

\subsection{Simulation Study 2}

Additional results for Simulation Study 2 for the empirical coverage of the pointwise 95\% confidence intervals and uniform 95\% uniform confidence bands are shown in Table~\ref{tab:simulation-study-coverage-2}.

\begin{table}[h]
    \centering
    \begin{tabular}{|lrrrrrrr|}
    \hline
    & & \multicolumn{3}{c}{95\%  Coverage} & \multicolumn{3}{c|}{95\% Uniform Coverage} \\
    & & Unconstrained & & & Unconstrained & & \\
    $\alpha$ & $n$ & SDR Curve & SDR Curve & Benchmark & SDR Curve  & SDR Curve & Benchmark \\
    \hline
    0 & 500 & 90.60\% & 91.10\% & 90.80\% & 92.80\% & 93.20\% & 93.35\%\\
     & 1000 & 91.75\% & 93.80\% & 93.35\% & 94.20\% & 95.90\% & 95.30\%\\
     & 2500 & 95.00\% & 96.15\% & 95.40\% & 97.25\% & 97.90\% & 97.50\%\\
     & 5000 & 94.75\% & 96.60\% & 95.70\% & 97.00\% & 98.10\% & 97.75\%\\
    1.5 & 500 & 93.60\% & 90.90\% & 93.00\% & 96.65\% & 93.85\% & 95.70\%\\
     & 1000 & 94.85\% & 93.50\% & 95.15\% & 98.00\% & 96.15\% & 97.50\%\\
     & 2500 & 95.55\% & 94.50\% & 96.35\% & 98.85\% & 97.30\% & 98.10\%\\
     & 5000 & 95.15\% & 94.00\% & 95.95\% & 97.80\% & 96.35\% & 97.80\%\\
    3 & 500 & 94.00\% & 90.35\% & 93.35\% & 97.05\% & 93.60\% & 96.40\%\\
     & 1000 & 93.85\% & 91.35\% & 94.30\% & 97.75\% & 94.35\% & 97.10\%\\
     & 2500 & 94.65\% & 89.65\% & 94.55\% & 98.20\% & 93.45\% & 97.10\%\\
     & 5000 & 93.90\% & 89.65\% & 93.55\% & 96.60\% & 92.95\% & 96.20\%\\
    \hline
    \end{tabular}
    \caption{Results of Simulation Study 2 comparing the proposed time-smoothed SDR algorithm (SDR Curve), the time-smoothed SDR algorithm with the isotonic constraint step removed (Unconstrained SDR Curve); and the benchmark in terms of the 95\% empirical coverage of the pointwise and uniform confidence regions.}
    \label{tab:simulation-study-coverage-2}
\end{table}

\clearpage

\bibliographystyle{plainnat}
\bibliography{refs}
\end{document}